\newtheorem{theorem}{Theorem}[section]
\newtheorem{lemma}[theorem]{Lemma}
\newtheorem{corollary}{Corollary}[theorem]
\newcommand{\old}[1]{{}}
\title{Approximation Algorithms for Anchored Multiwatchman Routes\thanks{A preliminary version of this paper was entitled ``Multirobot Watchman Routes in a Simple Polygon'' and appears in the Proceedings of the 36th Canadian Conference on Computational Geometry, St. Catharines, Ontario, Canada, pp. 49-55, 2024.}}
\author{Joseph S. B. Mitchell\thanks{Department of Applied Mathematics and Statistics, Stony Brook University, \texttt{joseph.mitchell@stonybrook.edu}}\hspace{0.2\textwidth}Linh Nguyen\thanks{Department of Applied Mathematics and Statistics, Stony Brook University, \texttt{linh.nguyen.1@stonybrook.edu}}}
\date{}
\begin{document}

\maketitle

\begin{abstract}
We study some variants of the $k$-\textsc{Watchman Routes} problem, the cooperative version of the classic \textsc{Watchman Routes} problem in a simple polygon. The watchmen may be required to see the whole polygon, or some pre-determined quota of area within the polygon, and we want to minimize the maximum length traveled by any watchman. While the single watchman version of the problem has received much attention is rather well understood, it is not the case for multiple watchmen version.

We provide the first tight approximability results for the anchored $k$-\textsc{Watchman Routes} problem in a simple polygon, assuming $k$ is fixed, by a fully-polynomial time approximation scheme. The basis for the FPTAS is provided by an exact dynamic programming algorithm. If $k$ is a variable, we give constant-factor approximations. 
\end{abstract}

\section{Introduction}
\subsection{Prior and Related Work}
The \textsc{Art Gallery} problem was introduced in 1973 by Victor Klee: given an art gallery with $n$ walls (a polygon $P$), determine the minimum number of stationary guards needed stationed within $P$ so that every point inside the gallery is visible to at least one guard. Since then, the \textsc{Art Gallery} problem, along with its many variants, has been the focus of extensive research in computational geometry and algorithms.

When guards are mobile, a single guard can monitor any connected domain. Consequently, the objective shifts to finding routes for one or more guards that optimize certain aspects of their movement, such as path lengths or the number of turns. The problem of minimizing the distance a guard must travel to cover the entire polygon is known as the \textsc{Watchman Route} problem. Chin and Ntafos introduced the \textsc{Watchman Route} problem, proved that it is NP-hard in polygons with holes (later revisited in~\cite{dumitrescu2012watchman}), and provided a linear time algorithm for simple orthogonal polygons. For general simple polygons, there are exact polynomial-time algorithms, for example, \cite{dror2003touring} gives an $O(n^3\log n)$ algorithm for the anchored version (where a starting point $s$ on the route is specified) and an $O(n^4\log n)$ algorithm for the floating version (where no starting point is given). Recently, a slightly better algorithm for the anchored version with $O(n^3)$ running time has also been proposed~\cite{tan2018touring}. In a polygon with holes, Mitchell showed that the \textsc{Watchman Route} problem cannot be approximated better than an $O(\log n)$ factor, and gave an $O(\log^2n)$-approximation.

In some scenarios, complete coverage may be impractical or unnecessary, so we are also interested in finding the shortest route that covers at least an area of $A \geq 0$ within $P$. This variant is known as the \textsc{Quota Watchman Route} problem, introduced in~\cite{huynh_et_al:LIPIcs.SWAT.2024.27}. In contrast to the tractable \textsc{Watchman Route} problem, the \textsc{Quota Watchman Route} problem is weakly NP-hard, but a fully polynomial-time approximation scheme (FPTAS) is known. Any results for the \textsc{Quota Watchman Route} problem can be applied to the \textsc{Watchman Route} problem by simply setting $A$ to be the area of $P$.

We consider the generalization of both aforementioned problems to multiple agents, the $k$-\textsc{Watchman Routes} problem and the \textsc{Quota $k$-Watchman Routes} problem, with the objective of minimizing the length of the longest path traveled by any one watchman. Even in a simple polygon, when no starting points are specified (so, we are to determine the best starting locations), both problems are NP-hard to approximate within any multiplicative factor~\cite{packer2008computing}. 

Our focus is thus on the anchored version, where a team of robots or searchers enter a domain $P$ through a door on its boundary to search for a stationary target that may be randomly distributed within the domain. The goal is to plan an optimal collective effort to ensure at least a certain probability of detection (1 in the $k$-\textsc{Watchman Routes} problem and some $p \in [0, 1]$ in the \textsc{Quota $k$-Watchman Routes} problem).

Little is known about the $k$-\textsc{Watchman Routes} problem in its most general form. Very recently, Nilsson and Packer~\cite{nilsson2024approximation} proposed some constant-factor approximation algorithms for when $k=2$, their methods however do not extend to larger values of $k$. Additionally, some studies have focused on developing and analyzing experimental heuristics~\cite{faigl2010approximate, packer2008computing}, but these approaches lack provable performance guarantees. It was clear that many algorithmic results for the \textsc{Watchman Routes} problem do not generalize to the $k$-\textsc{Watchman Routes} problem.

\subsection{Main Results}
This paper is an extended version of our earlier work that appeared in preliminary form in~\cite{mitchell2024multirobot}, and includes some new findings. The main results can be summarized as follows:
 \begin{itemize}
     \item Assuming that $k$ is fixed and relatively small, we give a pseudopolynomial-time exact algorithm to solve the anchored $k$-\textsc{Watchman Routes} problem in a simple orthogonal (integral coordinate) polygon $P$ under $L_1$ distance. This is a reasonable assumption as in most practical situations, it is not feasible to employ an arbitrarily large number of watchmen or robots.
     \item For any fixed $k$, we acquire an FPTAS for the anchored $k$-\textsc{Watchman Routes} problem in an orthogonal simple polygon (with orthogonal watchmen movements) and a general simple polygon. Since the anchored $k$-\textsc{Watchman Routes} problem is known to be weakly NP-hard~\cite{mitchell1991watchman}, we obtain tight approximability.
     \item For variable $k$, we give constant-factor approximation algorithms, for the anchored $k$-\textsc{Watchman Routes} problem as well as the anchored \textsc{Quota $k$-Watchman Routes} problem. The algorithms given are simple and have a low approximation factor.
 \end{itemize}
 While we restrict ourselves to the anchored version, we achieve better approximation factors for any $k$ than those proposed by Nilsson and Packer for the case where $k=2$~\cite{nilsson2024approximation}.

\section{Preliminaries}
\label{sec:prelims}
Our domain is a simple polygon $P$, whose boundary, $\partial P$ is a simple, non-self-intersecting polygonal chain, consisting of $n$ vertices. A vertex is \textit{reflex} (resp. \textit{convex}) if its internal angle is greater than (resp. less than) 180 degrees. An \textit{orthogonal simple polygon} is a simple polygon where all internal angles are 90 or 270 degrees.

We say that $x\in P$ and $y\in P$ \textit{see} each other if the segment $xy$ is contained inside $P$. The \textit{visibility region} of $x$, denoted by $V(x)$, is defined as the set of all points that $x$ sees, i.e. $V(x) = \{y\in P: xy \subseteq P\}$. The visibility region of a segment or route (or any subset) $X\subseteq P$ is the union of the visibility regions of all points of $X$; or equivalently the set of all points that see some point of $X$. Efficient algorithms for computing visibility regions of points and segments are known, see~\cite{guibas1986linear, toth2017handbook}.

The \textit{geodesic shortest path} from $x\in P$ to $y\in P$, which we denote by $\pi(x,y)$ is a shortest path between $x$ and $y$ that stays within $P$. If $P$ is a simple polygon, then $\pi(x,y)$ is necessarily unique and polygonal. The \textit{geodesic distance} between $x$ and $y$ is the length of $\pi(x,y)$ measured in the $L_2$ metric, also called the Euclidean metric. The geodesic shortest path from $x$ to a set $Y\subseteq P$ is $\pi(x,y)$ such that the length of $\pi(x,y)$ is the minimum among all $y\in Y$. Throughout this paper, if we use the term ``geodesic shortest path'', we mean the Euclidean geodesic shortest path. The \textit{geodesic convex hull} or \textit{relative convex hull} of $X\subseteq P$ is the minimal subset of
$P$ that contains $X$ and is closed under taking shortest paths; one important property of the relative convex hull is that it is the minimum-perimeter superset of $X$ within $P$. We also consider another metric, namely the $L_1$ metric, also known as the Manhattan metric. A polygonal path with all edges parallel to an axis is called a rectilinear path. A rectilinear path of minimum length between two points that stays within $P$ is called a \textit{geodesic $L_1$ shortest path}. Unlike the Euclidean metric, there are, in general, multiple geodesic $L_1$ shortest paths between two points, all of the same length. We denote by $\pi^{\perp}(x,y)$ a geodesic $L_1$ shortest paths from $x$ to $y$ . Refer to the survey~\cite{mitchell2000geometric} for more details about geodesic shortest paths and relative convex hulls.

The \textsc{Watchman Route} problem asks for a minimum-length tour that sees all of $P$. A \textit{visibility cut} $c_i$ with respect to the starting point $s$ is a chord obtained from extending the edge $e$ incident on a reflex vertex, $v_i$, where $e$ is such that its extension creates a convex vertex at $v_i$ in the subpolygon containing $s$. The other subpolygon (not containing $s$) is the \textit{pocket} induced by $c_i$. Furthermore, $c_i$ is an \textit{essential cut} if its pocket does not fully contain any other pocket. The shortest visibility covering tour must visit all essential cuts in order around $\partial P$ \cite{chin1991shortest, dror2003touring}, thus by treating the \textsc{Watchman Route} problem as an instance of the ``touring polygons'' problem, we can get a polynomial-time algorithm~\cite{dror2003touring}.

The \textsc{Quota Watchman Route} problem seeks a minimum-length tour that sees at least a given amount of area within $P$. The QWRP is (weakly) NP-hard, and the touring polygons approach does not apply since we do not have the optimality structure implied by having to see all of $P$, which yields a sequence of essential cuts (the watchman may only need to look ``partially'' around some corners to achieve the quota) for the tour to visit. By exploiting the following geometric property: an optimal (complete or partial) visibility covering tour must be relatively convex (the tour as well as its interior are closed under taking geodesic shortest paths), we can separate the visibility region of different intervals of the tour. This allows for a simple dynamic programming algorithm on an appropriate discretization of the domain, leading to an FPTAS~\cite{huynh_et_al:LIPIcs.SWAT.2024.27}.

We consider the generalization of both of the above problems to multiple watchmen. Denote by $|.|$ the measure of geometric objects (such as length or area). Given a simple polygon $P$ and a starting point $s$. The $k$-\textsc{Watchman Routes} problem is that of computing a collection of $k$ tours $\{\gamma_i\}_{i=1,\ldots, k}$ such that $\bigcup\limits_{i=1,\ldots,k}V(\gamma_i) = P$ and $\max\limits_{i=1,\ldots,k}|\gamma_i|$ is minimized. Note that each $\gamma_i$ is necessarily a polygonal tour (Corollary~\ref{cor:shortest}).

In any multi-agent collaboration problem, the two commonly studied objective functions are min-max and min-sum cost, however in the context of the $k$-\textsc{Watchman Routes} problem with a common starting point for all watchmen, the min-sum version is trivially solvable: simply take the relative convex hull of $\{\gamma_i\}_{i=1,\ldots, k}$, its perimeter is no greater than $\sum_{i=1}^k|\gamma_i|$ since all routes are connected (through $s$), thus we can let one watchman travel the perimeter of said relative convex hull to see all of $P$ while all others stand still. In other words, the optimal solution for the min-sum objective is a shortest route for a single watchman and $k-1$ degenerate routes of length 0. If the perimeter of the relative convex hull of $\{\gamma_i\}_{i=1,\ldots, k}$ does not pass through $s$, instead of said relative convex hull, we can simply take the (``nearly relatively convex'') minimum-perimeter set enclosing $\{\gamma_i\}_{i=1,\ldots, k}$ and containing $s$ as a vertex, which is computable if given $s$ and the set $\{\gamma_i\}_{i=1,\ldots, k}$, and proceed with a similar argument. The min-max version remains NP-hard, even with a common depot.

The next problem we study, the \textsc{Quota $k$-Watchmen Routes} problem has the same min-max objective function, but with a generalized constraint of $\left|\bigcup\limits_{i = 1,\ldots, k}V(\gamma_i)\right| \ge A$ for some given $0 \le A \le |P|$. The fraction of area seen, $\frac{A}{|P|}$, can be interpreted as the probability that the watchmen detect a target at an unknown location uniformly distributed in $P$.

\section{Multiwatchman Routes With a Constant Number of Watchmen}
First, we characterize the structure of an optimal collection of $k$-watchman routes $\{\gamma_i\}_{i=1,\ldots, k}$.

\begin{lemma}
    \label{lem:visit_all_essential_cuts}
    $\bigcup\limits_{i=1,\ldots,k}V(\gamma_i) = P$ if and only if $\{\gamma_i\}_{i=1,\ldots, k}$ collectively visit all essential cuts of $P$.
\end{lemma}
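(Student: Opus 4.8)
The plan is to prove both directions of the biconditional by leveraging the essential-cut characterization of single-watchman visibility coverage that is already established in the preliminaries. The key observation is that seeing all of $P$ is equivalent to ``clearing'' every pocket, and a pocket induced by an essential cut $c_i$ is cleared precisely when some point behind (on the far side of) $c_i$ is seen, which happens if and only if the collection of routes crosses $c_i$.

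First I would argue the contrapositive of the forward direction (coverage implies visiting all cuts). Suppose some essential cut $c_j$ is not visited by any $\gamma_i$; that is, every $\gamma_i$ lies entirely in the subpolygon $P_j$ containing $s$ (no route enters the pocket behind $c_j$). I claim there is then a point $p$ in the pocket of $c_j$ that no $\gamma_i$ sees. The reasoning mirrors the single-watchman case: since $c_j$ is an essential cut, its pocket contains points that are only visible from a point $q$ if the segment $pq$ crosses $c_j$; because every route stays on the $s$-side of $c_j$, no watchman can see such a point. Hence $\bigcup_i V(\gamma_i) \neq P$, completing the contrapositive.

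For the reverse direction (visiting all essential cuts implies full coverage), I would show that once all essential cuts are crossed, every pocket and the ``core'' region near $s$ are all seen. The core region $P \setminus \bigcup_i (\text{pocket of } c_i)$ is visible from $s$ itself (equivalently from any route through $s$), since by definition the visibility cuts partition off exactly the parts of $P$ not visible from the $s$-region. For each pocket, I would use the standard fact that a single watchman route crossing all essential cuts sees all of $P$ (the essential-cut result of Chin--Ntafos cited in the preliminaries); applied to the union of the routes, crossing $c_i$ means a route enters far enough into the $i$-th pocket to guarantee that pocket is fully seen, because each essential cut's pocket contains no other pocket, so clearing the essential cuts clears every pocket. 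Summing over all cuts gives $\bigcup_i V(\gamma_i) = P$.

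The main obstacle I expect is making the phrase ``visit all essential cuts'' precise and verifying that \emph{collectively} crossing the cuts (possibly with different watchmen crossing different cuts) suffices, rather than requiring a single route to cross them all in cyclic order as in the classical theorem. The classical result relies on visiting the cuts in their boundary order along one tour; here I must decouple that, arguing cut-by-cut that visibility of a given pocket depends only on \emph{some} route crossing the corresponding essential cut, independent of the order or which watchman does so. The cleanest way to handle this is to establish, as a self-contained sub-claim, that the pocket of an essential cut $c_i$ is entirely contained in $V(\gamma)$ for any route $\gamma$ that crosses $c_i$ (so coverage of each pocket is a purely local, per-cut condition), and then take the union over $i$; this reduces the multiwatchman statement to $k$ independent single-cut assertions.
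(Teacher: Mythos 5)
Your forward (necessity) direction is structurally fine, although the justification you give is a non sequitur as written: what forces a watchman into the closed pocket of $c_j$ is \emph{not} that sight lines to pocket points must cross $c_j$ (sight lines crossing a cut from the $s$-side are precisely how pockets get seen from outside), but that points of the pocket adjacent to the reflex edge defining $c_j$ are visible only from points of the closed pocket itself. The genuine gap is in your reverse (sufficiency) direction. Your key sub-claim --- that the pocket of an essential cut $c_i$ is contained in $V(\gamma)$ for \emph{any} route $\gamma$ crossing $c_i$ --- is false, so the per-cut decoupling collapses. Counterexample: let $P$ be the square $[0,10]\times[0,10]$ with $s=(5,1)$, together with a convex pentagonal chamber with vertices $(3,10),(1,12),(5,15),(9,12),(7,10)$, attached through the wide mouth $\{(x,10): 3\le x\le 7\}$. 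The only reflex vertices are $(3,10)$ and $(7,10)$, and the only cuts are the extensions of the two slanted chamber walls into the room: $c_1$ from $(3,10)$ to $(10,3)$ and $c_2$ from $(7,10)$ to $(0,3)$. Both are essential: each pocket is the chamber plus a triangle of the room, and neither pocket contains the other. The route $\gamma$ going from $s$ straight to $(10,3)$ and back crosses $c_1$ at its endpoint, yet it fails to see the part of the pocket of $c_1$ near the chamber's right wall. For the point $(8.5,11.8)$, say, every point $q$ of $\gamma$ has $q_x\ge 5$ and $q_y\le 3$, so the segment from $q$ to $(8.5,11.8)$ meets the line $y=10$ at abscissa $q_x\bigl(1-\lambda\bigr)+8.5\lambda \ge 5+3.5\lambda$ with $\lambda = \frac{10-q_y}{11.8-q_y}\ge \frac{7}{8.8}$, hence at abscissa greater than $7$, outside the mouth, where it is blocked by the wall. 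Which parts of the pocket of $c_1$ are seen depends on whether the route gets near $c_2$; pocket coverage is simply not a local, per-cut condition.

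Your other supporting claim --- that the ``core'' $P\setminus\bigcup_i(\hbox{pocket of }c_i)$, union over \emph{essential} cuts, is visible from $s$ --- is also false: in a spiral polygon the pockets are nested, so there is exactly one essential cut (the innermost one), and the core is all of $P$ except the innermost chamber, whereas $s$ at the outer end sees only the first stretch of the corridor. What the paper does instead, and what your outline is missing, is the one observation that makes the statement true: all $k$ routes pass through the common anchor $s$, so their \emph{union} is a single connected set, and the known characterization for connected sets (Chin--Ntafos, Carlsson et al.) applies to that union directly. The proof of that characterization is inherently global: if a connected set misses some point $p$, then the entire set lies behind a single window of $V(p)$, which yields one cut whose entire closed pocket --- and hence the closed pocket of some essential cut inside it --- the set avoids. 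That global use of connectivity is exactly what cannot be recovered from the $k$ independent single-cut assertions you propose.
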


\begin{proof}
    See Figure~\ref{fig:visit_all_essential_cuts}. Since all routes pass through $s$, their union is connected. The proof follows simply from the well known fact: A connected set sees all of $P$ if and only if it visits all essential cuts \cite{carlsson1999finding, chin1991shortest, dror2003touring}.
\end{proof}

\begin{figure}[!h]
    \centering
    \includegraphics[width=0.75\linewidth]{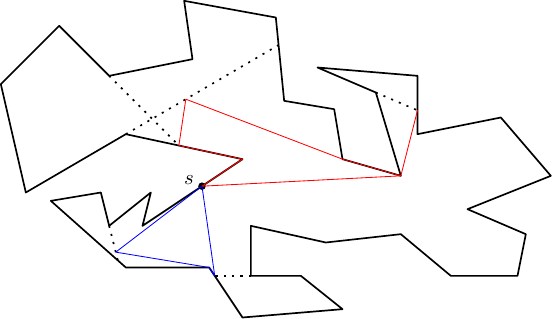}
    \caption{A simple polygon and its essential cuts (dashed). The 2-watchman routes (red and blue) visit all essential cuts between them and see the whole polygon.}
    \label{fig:visit_all_essential_cuts}
\end{figure}

Note that when we have multiple starting points, or no starting points specified at all, Lemma~\ref{lem:visit_all_essential_cuts} does not necessarily hold.

For $1 \le i \le k$, denote by $C_i$ the set of essential cuts visited by $\gamma_i$. It is worth pointing out that, due to Lemma~\ref{lem:visit_all_essential_cuts}, $\gamma_i$ behaves like a solution to the touring polygons problem on a subset of essential cuts.
\begin{corollary}
    \label{cor:shortest}
    There exists an optimal solution $\{\gamma_i\}_{i=1,\ldots, k}$ such that for any $i$, $\gamma_i$ is the shortest route to visit all cuts in $C_i$ and $s$ in the order in which they appear around $\partial P$ (more precisely, the order of the reflex edges whose extensions constitute the essential cuts).
\end{corollary}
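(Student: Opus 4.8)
The plan is to prove Corollary~\ref{cor:shortest} by showing that, given any optimal solution, we can replace each individual route $\gamma_i$ by a shortest route visiting the same collection of essential cuts $C_i$ (together with $s$) in cyclic order, without increasing any route length and without violating the visibility covering constraint. The corollary is essentially a consequence of Lemma~\ref{lem:visit_all_essential_cuts} combined with the optimality structure of the single-watchman (touring polygons) problem, so the work is to verify that this per-route replacement is both legitimate and length non-increasing.

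First I would fix an optimal collection $\{\gamma_i\}_{i=1,\ldots,k}$ and, for each $i$, let $C_i$ be the set of essential cuts that $\gamma_i$ visits. By Lemma~\ref{lem:visit_all_essential_cuts}, the union $\bigcup_i \gamma_i$ sees all of $P$ exactly when $\bigcup_i C_i$ is the full set of essential cuts; crucially, this condition depends only on \emph{which} cuts each route visits, not on the particular geometry of $\gamma_i$. Therefore I may replace each $\gamma_i$ by \emph{any} closed route through $s$ that visits precisely the cuts in $C_i$, and the resulting collection still covers $P$. This decouples the covering constraint from the length minimization and lets me optimize each route independently.

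The key step is then to argue that for a single closed tour constrained to pass through $s$ and to visit a fixed set of cuts $C_i$, the shortest such tour visits the cuts in the cyclic order in which their defining reflex edges appear around $\partial P$. This is precisely the structural fact underlying the single-watchman watchman route: an optimal visibility-covering tour must be relatively convex and crosses the essential cuts in boundary order (the ``touring polygons'' ordering of~\cite{chin1991shortest, dror2003touring}). I would invoke this known ordering property --- an out-of-order tour can be shortcut by an exchange/uncrossing argument that removes a crossing and strictly decreases length while still visiting the same cuts --- so that the shortest tour visiting $C_i$ and $s$ is well defined as the touring-polygons solution on the ordered sequence $C_i$. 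Replacing $\gamma_i$ by this shortest tour cannot increase $|\gamma_i|$, hence cannot increase $\max_i |\gamma_i|$, so the modified collection is still optimal and has the claimed form.

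The main obstacle is making the ordering/uncrossing argument fully rigorous in the geodesic (rather than Euclidean straight-line) setting: the cuts are chords of $P$, the tour is constrained to lie in $P$, and ``shortcutting'' a crossing must be done via geodesic shortest paths inside $P$ so that one does not inadvertently leave the polygon or drop a cut that was being visited at the crossing point. I would handle this by appealing to the relative convexity of optimal covering tours and to the established correctness of the touring-polygons formulation for the anchored watchman route, which already encapsulates exactly this geodesic exchange argument; the only new observation needed is that the argument applies verbatim to an \emph{arbitrary} fixed subset $C_i$ of essential cuts (not just the full set), since the ordering property is a statement about any sequence of cuts visited by a relatively convex tour through $s$.
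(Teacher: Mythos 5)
Your proposal is correct and follows essentially the same route as the paper: the paper treats the corollary as an immediate consequence of Lemma~\ref{lem:visit_all_essential_cuts} (coverage depends only on \emph{which} essential cuts each route visits, so each $\gamma_i$ can be independently replaced by a shortest tour through $s$ visiting $C_i$) combined with the known boundary-ordering property of shortest tours visiting cuts from the single-watchman/touring-polygons literature~\cite{chin1991shortest, dror2003touring}. Your additional care in noting that the ordering/uncrossing argument applies to an arbitrary subset of essential cuts is exactly the point the paper makes when it says $\gamma_i$ ``behaves like a solution to the touring polygons problem on a subset of essential cuts.''
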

One immediate observation is that each $\gamma_i$ is necessarily a closed polygonal curve, with at most $2n$ vertices~\cite{dror2003touring, mitchell2013approximating}. It is easy to show that if $i\ne j$, then $C_i \cap C_j = \varnothing$. Suppose to the contrary that, there exists $c\in C_i \cap C_j$, then we can replace $\gamma_i$ by the shortest tour that visits all of $C_i\setminus c$, which is no longer than the shortest tour that visits all of $C_i$.

\subsection{Orthogonal simple polygon, rectilinear movement}
We first consider the special case where $P$ is a simple, orthogonal polygon; without loss of generality, assume all edges of $P$ are either horizontal or vertical and all vertices of $P$ have integer coordinates. Further assume that the watchmen all start from a common starting point $s\in \partial P$ and can only move along axis-aligned segments. It is known that even for $k = 2$, the general $k$-\textsc{Watchman Routes} in a simple polygon is (weakly) NP-hard via a simple reduction from the \textsc{Partition} problem~\cite{mitchell1991watchman} (given a set of positive integers, partition it into two subsets of equal sum). The reduction can be easily modified to show that the anchored orthogonal version we consider here is also NP-hard.

\subsubsection{Dynamic programming exact algorithm} We decompose $P$ into rectangular cells using maximal (within $P$) extensions of all edges, as well as a horizontal and vertical line through $s$. This is known as the Hanan grid~\cite{hanan1966steiner} (Figure~\ref{fig:HanagridinP}). We refer to the set of vertices of the rectangular cells, including vertices of $P$ as grid points.

\begin{figure}[!h]
		\centering
		\includegraphics[scale=0.9]{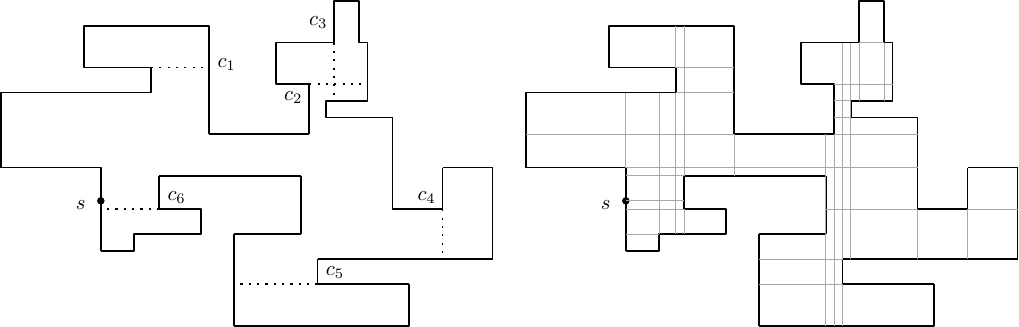}
		\caption{\textit{Left}: The essential cuts of $P$. \textit{Right}: The Hanan grid formed by extensions of all edges in $P$.}
		\label{fig:HanagridinP}
	\end{figure}

We show that, given the assumptions of an orthogonal $P$ and rectilinear movement, the Hanan grid (or in particular, the portion of the Hanan grid on the essential cuts) suffices to discretize the problem.
 \begin{lemma}
 \label{lem:discrete_viewpoints}
     There exists an optimal solution $\{\gamma_i\}_{i=1,\ldots, k}$ that lies on the Hanan grid.
 \end{lemma}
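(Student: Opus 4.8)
The plan is to fix an optimal solution, freeze its combinatorial structure, and then ``snap'' each tour onto the Hanan grid by a two-phase sliding argument that never increases any tour's length. By Corollary~\ref{cor:shortest} we may assume each $\gamma_i$ is a shortest rectilinear tour visiting $s$ together with the cuts of $C_i$ in their cyclic order around $\partial P$; this combinatorial data (the partition of the essential cuts and the cyclic orders) is what I keep fixed. Since every essential cut is an axis-aligned chord lying on a line of the Hanan grid, and $s$ lies on the grid by construction, it suffices to show that, holding the combinatorial type fixed, there is a shortest collection of tours all of whose edges lie on grid lines; such a collection is no longer than the optimum, hence is itself optimal.

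The geometric observation driving the snap is a slab decomposition. Let $y^-<y^+$ be two consecutive horizontal lines of the Hanan grid and consider the open slab $y^-<y<y^+$. Because all polygon vertices, all horizontal edges of $P$, and all endpoints of cuts lie on horizontal grid lines, the interior of the slab contains none of them; consequently $P$ meets the slab in a disjoint union of axis-aligned rectangles $[a_j,b_j]\times(y^-,y^+)$ that span the slab vertically, and every vertical cut crossing the slab does so along a full vertical segment at some vertical grid line $x=x_c$. A symmetric statement holds for vertical slabs between consecutive vertical grid lines.

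With this in hand, I run two phases. In the first phase I snap all horizontal edges to horizontal grid lines. Take a maximal horizontal edge $h$ of some $\gamma_i$ whose height $y_0$ is not a grid line; then $y_0$ lies in an open horizontal slab, and since $h$ is connected and inside $P$ it lies within a single rectangle $[a_j,b_j]\times(y^-,y^+)$. Consider translating $h$ vertically by $\delta$ while keeping its two incident vertical edges attached. The only length that changes is that of these two incident edges, and this change is a linear function of $\delta$: if both incident edges leave $h$ in the same vertical direction the total length is strictly monotone and I slide in the decreasing direction, while if they leave in opposite directions the total length is constant. In either case I can slide until $h$ reaches $y^-$ or $y^+$ without increasing $|\gamma_i|$. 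Feasibility is preserved throughout: $h$ stays inside the rectangle, hence inside $P$; any vertical cut $x=x_c$ that $h$ crosses spans the whole slab, so $h$ keeps meeting it at the new height; and a horizontal cut (or the fixed point $s$) can only be met on a slab boundary, where $h$ is already on the grid. If an incident vertical edge shrinks to zero before reaching the boundary, $h$ merges with a neighboring horizontal edge, strictly decreasing the number of bends. Ordering the moves by the potential ``number of off-grid maximal horizontal edges, ties broken by number of bends'' makes each move strictly decrease it, so the phase terminates with every horizontal edge on a horizontal grid line. The second phase is identical with the axes exchanged; crucially, sliding a vertical edge changes only the $x$-coordinates of bends and leaves every horizontal edge at its already-fixed grid height, so the two phases do not interfere. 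After both phases every bend lies at the intersection of a horizontal and a vertical grid line, i.e.\ at a grid point, every edge lies on a grid line, and no $|\gamma_i|$ has increased, so the resulting collection is an optimal solution on the Hanan grid.

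The step I expect to be the main obstacle is controlling the degenerate transitions during the slide --- in particular guaranteeing that feasibility (staying in $P$, meeting each required cut, and preserving the cyclic visiting order) is maintained across events where incident vertical edges vanish and edges merge, and arranging a potential that is simultaneously non-increasing in every $|\gamma_i|$ and strictly decreasing at each move so that termination is guaranteed. Separating the horizontal and vertical snaps into two non-interfering phases is precisely what makes this bookkeeping manageable.
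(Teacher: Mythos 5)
Your argument is sound in its core and proves the stated lemma, but it takes a genuinely different route from the paper's. The paper argues globally along each tour: it replaces $\gamma_i$ by the concatenation $\pi^\perp(s,p_{i1})\cup\pi^\perp(p_{i1},p_{i2})\cup\ldots\cup\pi^\perp(p_{ij},s)$ of geodesic $L_1$ shortest paths between consecutive contact points with the essential cuts, and then uses the additive identity $|\pi^\perp(s,p'_{i1})|+|p'_{i1}p_{i1}|=|\pi^\perp(s,p_{i1})|$ (there is no $L_1$ penalty for sliding a contact point along an axis-parallel cut) to argue inductively that each contact point may be taken to be the canonical touch point of a geodesic from the previous one, and that such geodesics can be realized on the Hanan grid. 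You argue locally instead, via a slab decomposition and a Hanan-style two-phase edge-sliding with a termination potential. What each buys: the paper's rerouting proof delivers exactly the structure that recursion~\eqref{eqn:recursion_DP} consumes --- tours are geodesics between canonical contact points, with $p$ the touch point of geodesics from $p'$ --- so the lemma's proof doubles as the correctness backbone of the dynamic program; your proof yields only grid membership, so that contact-point structure would still have to be re-derived (e.g., by shortcutting on the grid) before the DP is justified, but in exchange it is more elementary and more robust: it avoids the somewhat delicate claim that all $L_1$ geodesics from a point to a segment share their touch point, and it applies verbatim to any constraint of the form ``touch a prescribed family of axis-parallel chords whose endpoints lie on grid lines.'' Two points to tighten. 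First, optimal rectilinear watchman tours are frequently degenerate --- doubled, retraced segments with 180-degree turns (an out-and-back spike is often optimal) --- so the picture of a maximal horizontal edge flanked by two incident vertical edges does not always exist; your slab analysis does extend to spikes (a tip at off-grid height can only be serving a contact with a vertical cut, which spans the slab), but this case must be folded explicitly into the slide and the potential, as you anticipated. Second, ``consecutive horizontal grid lines'' must be read as consecutive supporting lines through vertices and $s$, since the maximal extensions themselves are chords and $P$ may meet a given height in several components; this full-line reading is what your rectangles-spanning-the-slab claim uses (and is implicitly what the paper's own proof needs as well).
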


 \begin{proof}
     Given an optimal solution $\{\gamma_i\}_{i=1,\ldots, k}$, for $1\le i\le k$ let $C_i = \{c_{i1}, \ldots, c_{ij}\}$ (in order around $\partial P$). Also, let $p_{i1}, \ldots, p_{ij}$ be the point where $\gamma_i$ first intersects with $c_{i1}, \ldots, c_{ij}$.
     
     First, note that for every $i$, we may replace $\gamma_i$ with a concatenation of geodesic $L_1$ shortest paths, namely $\gamma_i:=\pi^\perp(s, p_{i1})\cup \pi^\perp(p_{i1}, p_{i2})\cup \ldots\cup \pi^\perp(p_{ij}, s)$ without increasing $\max\limits_{i = 1,\ldots,k}|\gamma_i|$ while maintaining complete visibility coverage of $P$.
     
     Now, we argue that $\pi^\perp(s, p_{i1})$ is a geodesic $L_1$ shortest path from $s$ to $c_{i1}$. Suppose to the contrary, that geodesic $L_1$ shortest paths from $s$ to $c_{i1}$ make contact with $c_{i1}$ at $p'_{i1} \ne p_{i1}$ (all geodesic $L_1$ shortest paths from a point to a segment have the same endpoint). Due to orthogonality $|\pi^\perp(s,p'_{i1})| + |p'_{i1}p_{i1}| = |\pi^\perp(s,p_{i1})|$, which means $|\pi^\perp(s,p_{i1})| + |\pi^\perp(p_{i1}, p_{i2})| = |\pi^\perp(s,p'_{i1})| + |p'_{i1}p_{i1}| + |\pi^\perp(p_{i1}, p_{i2})| \ge |\pi^\perp(s,p'_{i1})| + |\pi^\perp(p'_{i1}, p_{i2})|$. This implies $\gamma_i$ should take a geodesic $L_1$ shortest path from $s$ to $c_{i1}$, and it suffices to find such a path in the Hanan grid. By a straightforward inductive argument, we can show the same for any portion of $\gamma_i$ between any two essential cuts.
 \end{proof}

 Lemma~\ref{lem:discrete_viewpoints} allows us to treat the orthogonal $k$-watchman problem as that of finding a set of grid points on the essential cuts (one per cut) that each route ${\gamma_i}$ visits so as to minimize the maximum length of any route. Note that we need not consider any grid point that lies beyond an essential cut.

For the purpose of describing our algorithm, it may be more convenient to think of each route $\gamma_i$ in terms of two segments: the path from $s$ to $p_{ij}$, which we denote by $\Gamma_i$, and the return trip $\pi^\perp(p_{ij},s)$. Formally, $\Gamma_i:=\pi^\perp(s, p_{i1})\cup \pi^\perp(p_{i1}, p_{i2})\cup \ldots\cup \pi^\perp(p_{i(j-1)}, p_{ij})$. 

Our algorithm is based on dynamic programming. Let $\{c_1, c_2, \ldots, c_m\}$ be the set of essential cuts in order around $\partial P$ ($s$ lies between $c_1$ and $c_m$). Each subproblem $(c_j, p_1, l_1, \ldots, p_k, l_k)$ is defined by a $(2k+1)$-tuple which consists of
\begin{itemize}
    \item an essential cut $c_j$,
    \item $k$ Hanan grid points $p_1, \ldots, p_k$, each on one different essential cut among $s, c_1, \ldots, c_j$. One of $p_1, \ldots, p_k$ is on $c_j$.
    \item $k$ integers $l_1, \ldots, l_k$.
\end{itemize}
We abuse notations slightly, for the sake of brevity, by using the tuple to also denote the value of subproblem in our dynamic programming algorithm. Subproblem $(c_j, p_1, l_1, \ldots, p_k, l_k)=$ TRUE if and only if there exists a collection of $k$ paths $\Gamma_1, \ldots, \Gamma_k$ collectively visiting all essential cuts from $c_1$ up to $c_j$ such that
 \begin{itemize}
     \item $\Gamma_i$ starts at $s$, ends at $p_i$, and consists of geodesic $L_1$ shortest paths between grid points on essential cuts,
     \item $|\Gamma_i| = l_i$.
 \end{itemize}
We call $\{\Gamma_i\}_{i=1,\ldots, k}$ the paths associated with subproblem $(c_j, p_1, l_1, \ldots, p_k, l_k)$. See Figure~\ref{fig:subproblem} for an illustration.
 \begin{figure}[h]
    \centering
    \includegraphics[width=0.65\textwidth]{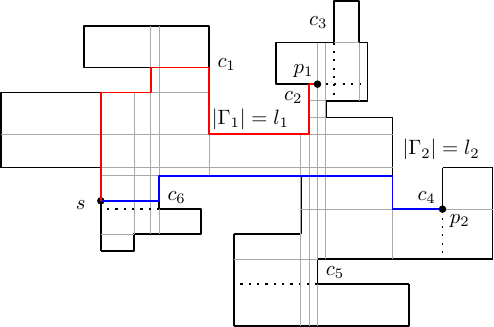}
    \caption{An example subproblem $(c_4, p_1, l_1, p_2, l_2)=$ TRUE, $\Gamma_1$ (resp. $\Gamma_2$) is drawn in red (resp. blue).}
    \label{fig:subproblem}
\end{figure}

For each Hanan grid point $p\in c_j$ and $i = 1, \ldots, k$, the value of the subproblem $(c_j, p_1, l_1, \ldots, p_i:=p, l_i, \ldots, p_k, l_k)$ is computed according to the following disjunction of valid subproblems
\begin{align}
\label{eqn:recursion_DP}
    (c_j, p_1, l_1, \ldots, p_i:=p, l_i, \ldots, p_k, l_k) = \bigvee\limits_{p'}(c_{j-1},p_1, l_1, \ldots, p_i:=p', l_i - |\pi^\perp(p, p')|, \ldots, p_k, l_k)
\end{align}
where $p'$ is taken from the set of all Hanan grid points on the cuts $c_1, \ldots, c_{j-1}$ such that geodesic $L_1$ shortest paths from $p'$ to $c_j$ make contact with $c_j$ at $p$ (Lemma~\ref{lem:discrete_viewpoints}). See Figure~\ref{fig:dp_recursion} for a concrete example of the above recursion.

\begin{figure}[h]
    \centering
    \includegraphics[scale=0.9]{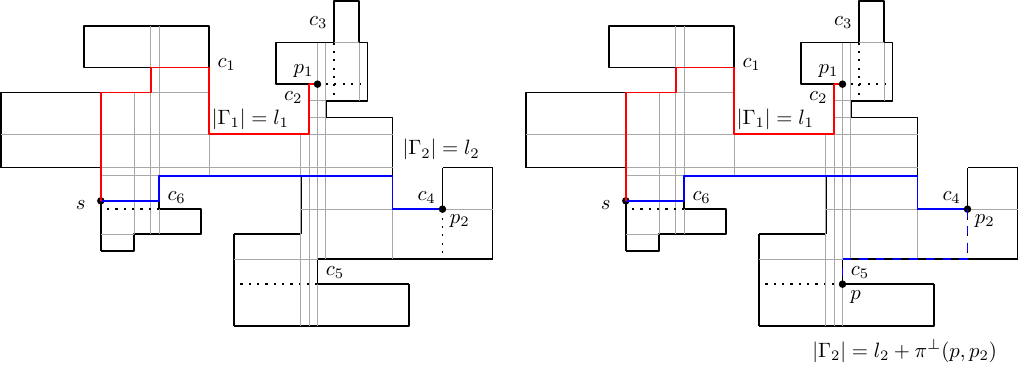}
    \caption{If $(c_4, p_1, l_1, p_2, l_2)=$ TRUE (right), then $(c_5, p_1, l_1, p, l_2 + \pi^\perp(p,p_2))=$ TRUE (left).}
    \label{fig:dp_recursion}
\end{figure}

The base case is simply $(s, s,0,\ldots,s,0) = \text{TRUE}$. From there, we tabulate all possible subproblems, take the subproblem $(c_m, p_1, l_1, \ldots, p_k, l_k)$ whose value is TRUE with the minimum $\max\limits_{i=1,\ldots, k}\{l_i + |\pi^\perp(p_i, s)|\}$ and return the associated paths, each concatenated with the corresponding return trip to the starting point, specifically $\{\gamma_i := \Gamma_i \cup \pi^\perp(p_i, s)\}_{i=1,\ldots, k}$. This gives us an exact optimal solution to the $k$-\textsc{Watchman Routes} problem in an orthogonal polygon, where the watchmen are restricted to rectilinear movement.

Our proof of correctness relies on two arguments:
 \begin{itemize}
     \item If the paths associated with subproblem $(c_{j-1},p_1, l_1, \ldots, p_i:=p', l_i - |\pi^\perp(p, p')|, \ldots, p_k, l_k)$ visit $s, c_1, \ldots, c_{j-1}$, then those
     associated with $(c_j, p_1, l_1, \ldots, p_i:=p, l_i, \ldots, p_k, l_k)$ visit all essential cuts from $s$ to $c_j$ since $p\in c_j$. By induction, the tours $\{\gamma_i\}_{i=1,\ldots, k}$ returned by the algorithm hence visit all essential cuts.
     \item $\gamma_i$ consists of geodesic $L_1$ shortest paths between contact points with essential cuts (proof of Lemma~\ref{lem:discrete_viewpoints}). If we identify two consecutive contact points on $\gamma_i$, say $p'$ and $p$ in that order, then the length of the portion of $\gamma_i$ from $s$ to $p$ is $l_i$ if and only if the length of the portion of $\gamma_i$ from $s$ to $p'$ is $l_i - |\pi^\perp(p, p')|$.
 \end{itemize}

 \subsubsection{Analysis of running time} First we bound the number of subproblems in the dynamic program. There are $O(n)$ essential cuts, $O(n)$ Hanan grid points on each cut. The length of each tour $\gamma_i$ can be trivially upper bounded by the perimeter of $P$, $|\partial P|$. Hence, in total there are $O\left(n\cdot n^{2k}\cdot |\partial P|^k\right) = O\left(n^{2k +1}|\partial P|^k\right)$ subproblems.
 
 We pre-compute and store geodesic $L_1$ shortest paths between Hanan grid points, as well as between Hanan grid points and essential cuts by solving the \textsc{All Pairs Shortest Path} problem in the embedded graph of the Hanan grid; the running time of this step is clearly dominated by the dynamic program. Then, we can solve each subproblem by iterating through $O(n^2)$ previously solved subproblems, using recursion~\eqref{eqn:recursion_DP}. The overall running time of the dynamic program is thus $O(n^{2k +3}|\partial P|^k)$, which is pseudopolynomial for any fixed $k$. This is in congruence with the weak NP-hardness from \textsc{Partition}, for which there exists an algorithm polynomial in the number of input integers and the largest input integer.

 We can also achieve a tighter, however straightforward, time bound, which is useful for deriving the FPTAS. Let $L$ be the length of a shortest single orthogonal watchman route of $P$, which is computable in $O(n)$ time if $P$ is simple and orthogonal (simply use the $L_1$ shortest path in the algorithm given in~\cite{CN} instead of the usual $L_2$ shortest path). Clearly $L \le |\partial P|$ and $\min\max\limits_{i=1,\ldots,k}|\gamma_i| \le L$ (one shortest single watchman route and $k-1$ routes of length 0 is a feasible solution to the $k$-\textsc{Watchman Routes} problem). Thus, we get a time bound of $O(n^{2k+3}L^k)$. 

 \subsubsection{Fully polynomial-time approximation scheme} The running time of the dynamic programming algorithm depends on the maximum possible length of the routes. To achieve fully polynomial running time for fixed $k$, we bound the number of subproblems by a ``bucketing'' technique.

 Recall in Section~\ref{sec:prelims}, we showed that the min-sum version with a common depot is readily solvable in polynomial time by arguing $L \le \sum_{i=1}^k|\gamma_i|$. Thus
 \begin{align}
\label{inq:lower_bound}
    \frac{L}{k}\le \max\limits_{i=1,\ldots,k}|\gamma_i|\le L.
\end{align}
Given any $\varepsilon > 0$, we divide $L$ into $\lceil\frac{nk}{\varepsilon}\rceil$ uniform intervals, each must be no longer than $\frac{\varepsilon L}{nk}$. The length of any geodesic $L_1$ shortest path we take into consideration for recursion \eqref{eqn:recursion_DP} must fall into one of the intervals (otherwise, the upper bound $\max\limits_{i=1,\ldots,k}|\gamma_i|\le L$ is violated). We round the length of the aforementioned path down to the nearest interval endpoint. Then, apply the dynamic programming algorithm to the new instance with a slight modification: let the intervals' endpoints define the subproblems instead of the $k$ integers $l_1, l_2, \ldots, l_k$. Denote by $\{\gamma_i'\}_{i=1,\ldots, k}$ the solution returned. For clarity, we denote by $d(.)$ distance/length in the ``rounded down'' instance. We have the following chain of inequalities 
\begin{align}
\label{inq:definition}
    \max\limits_{i=1,\ldots,k}|\gamma_i| \ge \max\limits_{i=1,\ldots,k}d(\gamma_i) \ge \max\limits_{i=1,\ldots,k}d(\gamma_i').
\end{align}
The first inequality follows simply from the fact that we round down any distance from the original instance. The second inequality is by definition, since $\{\gamma_i'\}_{i=1,\ldots, k}$ is an optimal solution of the new instance.

Now, any route in $\{\gamma_i'\}_{i=1,\ldots, k}$ consists of at most $n$ geodesic $L_1$ shortest paths between Hanan grid points on essential cuts, the length of each differs by no more than $\frac{\varepsilon L}{nk}$ between the original instance and the ``rounded down'' instance. Thus, for any $i$
\[|\gamma_i'| - d(\gamma_i') \le n\cdot\frac{\varepsilon L}{nk} \]
and therefore
\begin{align}
\label{inq:rounding}\max\limits_{i=1,\ldots,k}d(\gamma_i') + \frac{\varepsilon L}{k} \ge \max\limits_{i=1,\ldots,k}|\gamma_i'|.\end{align}
Combining \eqref{inq:lower_bound}, \eqref{inq:definition}, \eqref{inq:rounding},  we get 
\[(1 + \varepsilon)\max\limits_{i=1,\ldots,k}|\gamma_i| \ge \max\limits_{i=1,\ldots,k}|\gamma_i'|.\]
Thus, we achieve an FPTAS with a running time of $O\left(n^{2k+3}\cdot\left(\frac{nk}{\varepsilon}\right)^k\right) = O\left(\frac{n^{3k+3}}{\varepsilon^k}\right)$.

If the watchmen are not constrained to rectilinear movement, then the dynamic programming algorithm and the FPTAS above give a $\sqrt{2}$-approximation and a $(\sqrt{2} + \varepsilon)$-approximation, respectively, for unconstrained movement ($L_2$ metric).

\begin{theorem}
    For any fixed $k$, the anchored $k$-\textsc{Watchman Routes} problem in a simple orthogonal polygon has an FPTAS for the $L_1$ metric and a polynomial-time $(\sqrt{2} + \varepsilon)$-approximation for the $L_2$ metric. 
\end{theorem}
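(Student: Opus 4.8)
The plan is to read the theorem as an assembly of the machinery already built above: the exact dynamic program settles the $L_1$ case in pseudopolynomial time, the bucketing argument upgrades it to an FPTAS, and a norm-comparison argument then transfers the guarantee to the $L_2$ (unconstrained-movement) setting at the cost of a $\sqrt2$ factor. First I would confirm that the dynamic program built on recursion~\eqref{eqn:recursion_DP} is exact. Feasibility of the returned routes follows from Lemma~\ref{lem:visit_all_essential_cuts} together with the inductive invariant that the paths associated with any TRUE subproblem collectively visit every essential cut up to $c_j$; correctness of the recorded lengths follows from additivity of $L_1$ geodesic lengths between consecutive contact points, which is exactly the content of the two bullet arguments preceding the theorem. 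By Lemma~\ref{lem:discrete_viewpoints} it suffices to search the Hanan grid, so the table has $O(n^{2k+1}L^k)$ entries and is filled in total time $O(n^{2k+3}L^k)$, pseudopolynomial for fixed $k$.

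Next I would carry out the bucketing to obtain the FPTAS. Using the two-sided bound~\eqref{inq:lower_bound}, I replace the integer length coordinates $l_1,\dots,l_k$ by the $\lceil nk/\varepsilon\rceil$ endpoints obtained from subdividing $[0,L]$ uniformly, rounding every $L_1$ edge length down to the nearest endpoint, and redefining subproblems in terms of these endpoints. Running the same program on the rounded instance returns routes $\{\gamma_i'\}$, and the chain~\eqref{inq:definition}--\eqref{inq:rounding} — rounding down only lowers the optimum, while each of the at most $n$ edges of a route loses at most $\varepsilon L/(nk)$ — yields $(1+\varepsilon)\max_i|\gamma_i|\ge\max_i|\gamma_i'|$. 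The number of subproblems drops to $O\!\left(n^{2k+1}(nk/\varepsilon)^k\right)$, giving running time $O(n^{3k+3}/\varepsilon^k)$ and completing the $L_1$ claim.

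For the $L_2$ statement I would run the $L_1$ FPTAS and output its rectilinear routes, but measure them in the Euclidean metric, invoking on each segment the elementary inequality $\|v\|_2\le\|v\|_1\le\sqrt2\,\|v\|_2$. The Euclidean cost of the output is then at most its $L_1$ cost, which is at most $(1+\varepsilon)\,\mathrm{OPT}_1$, where $\mathrm{OPT}_1$ is the optimal rectilinear $L_1$ value. To relate $\mathrm{OPT}_1$ to the true Euclidean optimum $\mathrm{OPT}_2$, I would take an optimal $L_2$ solution, retain its contact points with the essential cuts, and replace each inter-contact arc by the $L_1$ geodesic between its endpoints. The resulting rectilinear routes visit the same cuts, hence remain feasible by Lemma~\ref{lem:visit_all_essential_cuts}, and their $L_1$ length is at most $\sqrt2\,\mathrm{OPT}_2$. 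Chaining the bounds gives Euclidean cost at most $\sqrt2(1+\varepsilon)\,\mathrm{OPT}_2$, and rescaling $\varepsilon\mapsto\varepsilon/\sqrt2$ produces a $(\sqrt2+\varepsilon)$-approximation within the same $O(n^{3k+3}/\varepsilon^k)$ time bound.

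The step I expect to be the main obstacle is the comparison $\mathrm{OPT}_1\le\sqrt2\,\mathrm{OPT}_2$ in the last stage. The delicate point is that rectilinearizing the optimal Euclidean tour must not inflate its length beyond the pointwise $\sqrt2$ factor; this rests on the fact that the $L_1$ geodesic distance between two points is a lower bound on the $L_1$ length of \emph{any} path joining them inside $P$ — rectilinear or not — so shortcutting each arc to its $L_1$ geodesic can only help, and the per-segment $\sqrt2$ estimate then dominates. I would also take care to confirm that the rectilinearized routes genuinely preserve all contact points (so feasibility via Lemma~\ref{lem:visit_all_essential_cuts} is maintained) and that the $\varepsilon$-rescaling leaves the polynomial dependence intact.
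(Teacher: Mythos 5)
Your proposal is correct and follows essentially the same route as the paper: the exact Hanan-grid dynamic program, the bucketing of $[0,L]$ into $\lceil nk/\varepsilon\rceil$ intervals with the chain \eqref{inq:lower_bound}--\eqref{inq:rounding} giving the $(1+\varepsilon)$ guarantee, and the norm comparison $\|v\|_2\le\|v\|_1\le\sqrt2\,\|v\|_2$ for the $L_2$ claim. The only difference is that you spell out the $\mathrm{OPT}_1\le\sqrt2\,\mathrm{OPT}_2$ transfer (rectilinearizing the optimal Euclidean routes between their contact points with the essential cuts, with feasibility preserved via Lemma~\ref{lem:visit_all_essential_cuts}), which the paper asserts in a single sentence without proof; your elaboration is the intended argument and is sound.
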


\subsection{General simple polygon, unconstrained movement}
In our dynamic programming algorithm for the orthogonal anchored $k$-\textsc{Watchman Routes} problem, by restricting to rectilinear movement and axis-aligned essential cuts, we have a discrete set of points at which the routes can come into contact with the essential cuts. We extend that idea to the case in which $P$ is a general simple polygon and the watchmen can travel in any directions with an appropriate  approach to discretization, to achieve an FPTAS. Naively discretizing the whole of $P$ into a fine grid can be far too computationally expensive. We overcome this challenge by showing how to localize the problem. 

\subsubsection{Discretization of $P$} Denote by $GD_s(r)$ the geodesic disk of radius $r$ centered at the given starting point $s$, i.e. the set of all points with geodesic distance (length of the geodesic shortest path) no greater $r$ from $s$. Let $r=r_{\min}$, where $r_{\min}$ is the smallest value of $r$ such that $V(GD_s(r)) = P$, i.e. $GD_s(r)$ sees everything in $P$ or equivalently, $\partial GD_s(r)$ is a visibility covering tour of $P$. To determine $r_{\min}$, simply compute the largest geodesic distance from $s$ to any of the $O(n)$ essential cuts, then $GD_s(r_{\min})$ touches every essential cut and sees all of $P$. Consider an optimal collection of tours $\{\gamma_i\}_{i=1,\ldots, k}$ with the minimum $\max\limits_{i=1,\ldots,k}|\gamma_i|$.

\begin{lemma}
\label{lem:bounding_general_poly_see_everything}
    $r_{\min}\le \max\limits_{i=1,\ldots,k}|\gamma_i| \le 6nr_{\min}$.
\end{lemma}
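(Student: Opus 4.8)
The plan is to establish the two inequalities separately, both by elementary arguments leveraging Lemma~\ref{lem:visit_all_essential_cuts} and the characterization of $r_{\min}$ as the largest geodesic distance from $s$ to an essential cut.

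For the lower bound $r_{\min}\le\max_i|\gamma_i|$, I would let $c^*$ be the essential cut attaining that maximum, so that the geodesic distance from $s$ to $c^*$ equals $r_{\min}$ and in particular every point of $c^*$ is at geodesic distance at least $r_{\min}$ from $s$. Since $\{\gamma_i\}_{i=1,\ldots,k}$ is feasible, Lemma~\ref{lem:visit_all_essential_cuts} guarantees that the routes collectively visit all essential cuts, so some $\gamma_i$ touches $c^*$ at a point $p$. As $\gamma_i$ is a closed tour passing through both $s$ and $p$, its length is at least the geodesic distance $d(s,p)\ge r_{\min}$ (indeed at least $2r_{\min}$), which gives $\max_i|\gamma_i|\ge r_{\min}$.

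For the upper bound I would exhibit a single feasible watchman route that is short, using the already-noted fact that a $k$-watchman solution in which one watchman traverses a covering tour while the other $k-1$ remain at $s$ is feasible; hence $\min\max_i|\gamma_i|$ is at most the length of any single-watchman covering tour. I would construct such a tour as a \emph{star} rooted at $s$: for each essential cut $c_j$, append the geodesic shortest path from $s$ to the nearest point of $c_j$ followed by its reverse, so the route leaves and returns to $s$ exactly once per cut. The result is a closed, connected curve through $s$ that visits every essential cut, hence by Lemma~\ref{lem:visit_all_essential_cuts} sees all of $P$. Each out-and-back excursion has length $2\,d(s,c_j)\le 2r_{\min}$, and the number $m$ of essential cuts is at most the number of reflex vertices of $P$, which is at most $n-3$. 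Summing over the excursions, the star tour has length at most $2m\,r_{\min}\le 2(n-3)r_{\min}\le 6nr_{\min}$, and therefore $\max_i|\gamma_i|\le 6nr_{\min}$.

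Neither direction is a genuine obstacle; the only points requiring care are confirming that the concatenation of out-and-back geodesics is legitimately a single closed tour realizing a covering route (so the reduction to a single watchman applies) and bounding the cut count $m$ so that the crude constant absorbs it. The factor $6$ carries substantial slack over the $2(n-3)$ actually needed, reflecting only a loose accounting of the number of essential cuts.
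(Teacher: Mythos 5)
Your proof is correct in substance but takes a genuinely different route from the paper's, in both directions. For the lower bound the paper never touches essential cuts: since every tour is closed and passes through $s$, each $\gamma_i$ lies inside $GD_s\bigl(\tfrac{1}{2}\max_i|\gamma_i|\bigr)$, so that disk sees all of $P$ and hence $r_{\min}\le\tfrac{1}{2}\max_i|\gamma_i|$; your farthest-cut argument yields the same factor ($\max_i|\gamma_i|\ge 2r_{\min}$) but leans on Lemma~\ref{lem:visit_all_essential_cuts} and the identity $r_{\min}=\max_j d(s,c_j)$. For the upper bound the paper exhibits a single covering route that traverses $\partial GD_s(r_{\min})$ plus a round trip from $s$, and bounds $|\partial GD_s(r_{\min})|$ by decomposing it into at most $n$ polygonal segments (each of length at most $2r_{\min}$ by the triangle inequality on geodesic distances to $s$) and circular arcs of total length at most $2\pi r_{\min}$; you instead use a star of out-and-back geodesics to the essential cuts. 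Both give $6nr_{\min}$, and your construction is arguably more elementary. What the paper's argument buys is generality: it is reused verbatim for Lemma~\ref{lem:bounding_general_poly_quota} and Lemma~\ref{lem:bounding_quota_general} in the quota setting, where feasibility means seeing area $A$ rather than all of $P$ and is \emph{not} characterized by visiting a finite set of cuts, so your essential-cut-based star would not transfer there. One slip to correct: a reflex vertex can induce up to \emph{two} essential cuts, one per incident edge (an L-shaped hexagon with $s$ at the corner opposite the notch has one reflex vertex but two essential cuts), so the right count is $m\le 2(n-3)$, not $n-3$; your star tour then has length at most $4(n-3)r_{\min}$, which is still below $6nr_{\min}$, so the stated bound survives the correction.
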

\begin{proof}
    The geodesic disk $GD_s\left(\frac{\max\limits_{i=1,\ldots,k}|\gamma_i|}{2}\right)$ contains $\{\gamma_i\}_{i=1,\ldots, k}$, and therefore must see all of $P$ (since $P$ is simple and hole-free). By definition $r_{\min}$ is the smallest radius such that $GD_s(r_{\min})$ sees all of $P$, so $r_{\min} \le\frac{\max\limits_{i=1,\ldots,k}|\gamma_i|}{2}$.

    In this proof, let $L$ be the length of the shortest unconstrained (direction-wise) watchman route of $P$ through $s$. Clearly, $L$ is an upper bound on $\max\limits_{i=1,\ldots,k}|\gamma_i|$. We show that $L \le 6nr_{\min}$. Indeed, since $\partial GD_s(r_{\min})$ sees all of $P$, $L \le |\partial GD_s(r_{\min})| + 2r_{\min}$ (the left hand side is the most cost incurred by traveling around $\partial GD_s(r_{\min})$ and the shortest path from $s$ to $\partial GD_s(r_{\min})$ twice in both directions).

    Finally, observe that $\partial GD_s(r_{\min})$ consists of polygonal chains that are portions of $\partial P$ and circular arcs; the circular arcs have total length no greater than $2\pi r_{\min}$. Each segment in the polygonal part of $\partial GD_s(r_{\min})$ has its length bounded by the sum of geodesic distances from its endpoints to $s$ (triangle inequality), which is no more than $2r_{\min}$. There are at most $n$ segments in the polygonal portions of $\partial GD_s(r_{\min})$, therefore their total length is no greater than $2nr$, implying $|\partial GD_s(r_{\min})| + 2r_{\min} \le 2nr_{\min} + 2\pi r_{\min} + 2r_{\min} \le 6nr_{\min}$.
\end{proof}
Starting from $r = r_{\min}$, if we repeatedly multiple $r$ by 2, at some point we must have $r\le\max\limits_{i=1,\ldots,k}|\gamma_i|\le 2r$, suppose we have reached this point. Consider a regular square grid of pixels of side lengths $\delta$ (we specify $\delta$ below) within an axis-aligned closed square box $B$ of size $r$-by-$r$ centered on $s$. The restriction of $B$ within $P$, i.e. $P\cap B$ contains an optimal $\{\gamma_i\}_{i=1,\ldots, k}$, note that this means $B$ necessarily intersects all essential cuts. We triangulate $P$ (using any of the numerous known algorithms, for example, the linear-time algorithm in~\cite{chazelle1991triangulating}), including $s$ as a vertex of the triangulation. The arrangement of the square grid, the triangulation and the essential cuts decomposes $P\cap B$ into convex cells, each of perimeter no greater than $4\delta$ (Figure~\ref{fig:discretization_general_simple_polygon}). The set of vertices of this decomposition is composed of intersections between chords of the triangulation, the essential cuts and the axis-aligned grid lines, as well as vertices of $P$ and the starting point $s$.

\begin{figure}[h]
    \centering
    \includegraphics[width=0.75\linewidth]{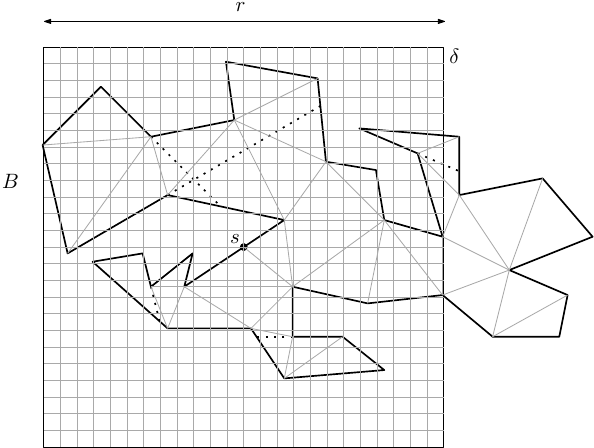}
    \caption{Discretization and localization of the general anchored $k$-\textsc{Watchman Routes} problem.}
\label{fig:discretization_general_simple_polygon}
\end{figure}

We argue that, for purposes of approximation, it suffices to restrict ourselves to grid-rounded solutions, that is, tours that have vertices among vertices of the decomposition. For an arbitrary $i =1 ,\ldots, k$, let $\sigma_1, \ldots, \sigma_m$ ($m\le 2n$) be the (closed) cells of the decomposition containing the vertices of $\gamma_i$. Some vertices of $\gamma_i$ may lie on the common boundary of multiple cells, in which case we can choose any of those cells; however, if a vertex of $\gamma_i$ is a reflection off of an essential cut, we choose a cell that is not in the pocket of the essential cut. The relative convex hull of $\gamma_i \cup \sigma_1 \cup \ldots\cup \sigma_m$ is the minimum-perimeter subset of $P$ that contains it, and must thus have perimeter no greater than $|\gamma_i| + |\partial\sigma_1| + \ldots + |\partial\sigma_m| \le |\gamma_i| + 8n\delta$. Moreover, since $\sigma_1, \ldots, \sigma_m$ are convex and enclose vertices of $\gamma_i$, the vertices of the relative convex hull $\gamma_i \cup \sigma_1 \cup \ldots\cup \sigma_m$ are vertices of $\sigma_1, \ldots, \sigma_m$.

Hence, for any $\varepsilon > 0$, if we let $\delta = \frac{\varepsilon L}{8nk}$, then there exists a grid-rounded tour $\gamma_i'$ such that $|\gamma_i'| \le (1 + \varepsilon)|\gamma_i|$. It is straightforward to argue that $\gamma'_i$ consists of consecutive geodesic shortest paths between grid points on some essential cuts (proof of Lemma~\ref{lem:discrete_viewpoints}).

\subsubsection{Fully polynomial-time approximation scheme} In fact, the very same dynamic programming algorithm for the anchored orthogonal $k$-\textsc{Watchman Routes} works to compute (approximately) $\{\gamma_i'\}_{i=1,\ldots, k}$, though we require some modifications, which we have actually introduced. In that algorithm, we have integers defining subproblems since Hanan grid points have integer coordinates, thus the length of any rectilinear path we consider is integral. In the general case with the $L_2$ metric, lengths and distances are potentially irrational even with the assumption that the vertices of $P$ have integer coordinates. Consequently, there is no natural increment we can set between consecutive values.

We compute the shortest single watchman route, let its length be $L$. Recall that we argued the following chain of inequalities for rectilinear routes, but the same can be said for the general case since orthogonality did not play a part in the argument:
\begin{align*}
    \frac{L}{k}\le \max\limits_{i=1,\ldots,k}|\gamma_i|\le L.
\end{align*}
We divide $L$ into $\lceil\frac{nk}{\varepsilon}\rceil$ uniform intervals and round the length of any geodesic shortest path the we consider in the algorithm down to the nearest interval endpoint. Using the same analysis of the approximation factor as in the rectilinear routes case, the dynamic program with subproblems defined by the intervals' endpoints returns a $(1 + \varepsilon)$-approximation to $\{\gamma'_i\}_{i=1,\ldots, k}$, which is a $(1 + \varepsilon)^2$-approximation to $\{\gamma_i\}_{i=1,\ldots, k}$. 

It remains to show that the dynamic programming algorithm with the above discretization of $P$ has fully polynomial running time (for fixed $k$). The number of grid lines within $P\cap B$ is $O\left(\frac{r}{\delta}\right) =  O\left(\frac{nk}{\varepsilon}\right)$. Therefore, the number of grid points on the essential cuts (intersections of essential cuts with triangulation chords and grid lines) is $O(n)\cdot O(n) + O(n)\cdot O\left(\frac{nk}{\varepsilon}\right) = O\left(\frac{n^2k}{\varepsilon}\right)$. There are in total $O\left(n\cdot\left(\frac{n^2k}{\varepsilon}\right)^k\cdot\left(\frac{nk}{\varepsilon}\right)^k\right) = O\left(\frac{n^{3k+1}}{\varepsilon^{2k}}\right)$ subproblems, each of which requires iterating through $O\left(\frac{n^2k}{\varepsilon}\right)$ previously solved subproblems. This results in an overall $O\left(\frac{n^{3k+3}}{\varepsilon^{2k+1}}\right)$ running time for the dynamic program, which computes a $(1 + \varepsilon)^2$-approximation. Note that if we let $2\varepsilon + \varepsilon^2 = \varepsilon'$, then $\frac{1}{\varepsilon} = \Theta\left(\frac{1}{\varepsilon'}\right)$ as $\varepsilon$ and $\varepsilon'$ approach 0, thus the running time is in the same order when written in terms of $\varepsilon'$.

We summarize the description of our algorithm as follows:
\begin{itemize}
    \item[Step 1:] Compute $L$, the length of the the shortest single watchman route for $P$.
    \item[Step 1:] Triangulate $P$.
    \item[Step 2:] Compute $r_{\min}$, the smallest value for which $GD_s(r_{\min})$ sees all of $P$. Set $r:=r_{\min}$.
    \item[Step 3:] Overlay a square grid of $\delta$-size pixels within a bounding box $B$ of size $r$-by-$r$ onto the triangulation of $P$ and the essential cuts, where $\delta = \frac{\varepsilon L}{8nk}$.
    \item[Step 4:] If $B$ intersects all essential cuts, execute the dynamic program, output the collection of routes $\{\gamma_i'\}_{i=1,\ldots, k}$ and store it.
    \item[Step 5:] If $r \le 6nr_{\min}$, then set $r:=2r$, then repeat from Step 3. Else, terminate the algorithm.  
\end{itemize}
We return the collection $\{\gamma_i'\}_{i=1,\ldots, k}$ that minimizes $\max\limits_{i = 1, \ldots, k}|\gamma_i'|$ out of all collections from all values of $r$ in the doubling search. Since the doubling search for $r$ takes $\log(6n) = O(\log n)$ iterations, our algorithm runs in time $O\left(\frac{n^{3k+3}}{\varepsilon^{2k+1}}\log n\right)$.

\begin{theorem}
    For any fixed $k$, the anchored $k$-\textsc{Watchman Routes} problem in a simple polygon has an FPTAS. 
\end{theorem}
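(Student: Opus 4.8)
The plan is to adapt the dynamic-programming FPTAS developed for the orthogonal case to a general simple polygon by discretizing $P$ appropriately. The essential difficulty, compared with the orthogonal setting, is twofold: the watchmen may now move in arbitrary directions, so there is no canonical lattice (like the Hanan grid) on which an optimal solution provably lies, and geodesic distances are in general irrational, so the length coordinates of a subproblem cannot be drawn from a natural integer increment. I would address the first issue by a spatial discretization of a \emph{localized} region and the second by a separate bucketing of lengths, absorbing each into a $(1+\varepsilon)$ factor.

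First I would localize the problem. By Lemma~\ref{lem:bounding_general_poly_see_everything} the optimal value $\max_i|\gamma_i|$ lies in $[r_{\min}, 6nr_{\min}]$, and from the min-sum argument it also lies in $[L/k, L]$. Since the optimal value is unknown, I would run a doubling search $r = r_{\min}, 2r_{\min}, 4r_{\min}, \ldots$ until reaching a scale with $r \le \max_i|\gamma_i| \le 2r$; at that scale every tour is confined to $P\cap B$ for an axis-aligned box $B$ of side $\Theta(r)$ centered at $s$, because each point of a closed tour through $s$ of length at most $2r$ is within geodesic distance $r$ of $s$, hence within Euclidean distance $r$ of $s$. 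This confinement is exactly what keeps the grid from having too many cells.

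Next I would build the decomposition and establish the grid-rounding guarantee. Overlaying a $\delta$-pixel grid with $\delta = \tfrac{\varepsilon L}{8nk}$, the triangulation of $P$, and the essential cuts partitions $P\cap B$ into convex cells of perimeter at most $4\delta$. For each optimal tour $\gamma_i$ I would take the relative convex hull of $\gamma_i$ together with the cells containing its (at most $2n$) vertices; this is the minimum-perimeter enclosing set, so the resulting grid-rounded tour $\gamma_i'$ satisfies $|\gamma_i'| \le |\gamma_i| + 8n\delta \le (1+\varepsilon)|\gamma_i|$, using $|\gamma_i|\ge L/k$. Moreover $\gamma_i'$ has its vertices among the decomposition vertices and, as in the proof of Lemma~\ref{lem:discrete_viewpoints}, is a concatenation of geodesic shortest paths between grid points on the essential cuts it visits. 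I would then run the very same dynamic program as in the orthogonal case, now over this decomposition, with the single modification that lengths are rounded down to one of $\lceil nk/\varepsilon\rceil$ uniform buckets of $L$; the correctness argument (induction on collectively visiting $c_1,\dots,c_j$, together with additivity of lengths along concatenated shortest paths) carries over verbatim, since orthogonality was never used. The bucketing analysis from the rectilinear case shows the returned solution is within $(1+\varepsilon)$ of $\{\gamma_i'\}_{i=1,\ldots,k}$, hence within $(1+\varepsilon)^2$ of the optimum; rescaling $\varepsilon$ yields a clean $(1+\varepsilon)$ guarantee.

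Finally I would verify the running time is fully polynomial for fixed $k$. The box $B$ contains $O(r/\delta) = O(nk/\varepsilon)$ grid lines, so each essential cut carries $O(n^2k/\varepsilon)$ grid points, giving $O\!\left(n\cdot(n^2k/\varepsilon)^k\cdot(nk/\varepsilon)^k\right) = O(n^{3k+1}/\varepsilon^{2k})$ subproblems, each resolved by scanning $O(n^2k/\varepsilon)$ predecessors; over the $O(\log n)$ doubling iterations this totals $O\!\left(\tfrac{n^{3k+3}}{\varepsilon^{2k+1}}\log n\right)$. I expect the main obstacle to be the localization step: one must bound the region that is actually discretized so that the pixel count stays polynomial, while still guaranteeing (via Lemma~\ref{lem:bounding_general_poly_see_everything} and the doubling search) that some scale confines a near-optimal solution to $P\cap B$. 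The grid-rounding via relative convex hulls and the composition of the two independent $(1+\varepsilon)$ roundings are the remaining points requiring care, but both reduce to geometric facts already established.
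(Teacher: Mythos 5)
Your proposal is correct and follows essentially the same route as the paper: localization via the geodesic-disk bound of Lemma~\ref{lem:bounding_general_poly_see_everything} and a doubling search, the square-grid/triangulation/essential-cut decomposition with $\delta = \frac{\varepsilon L}{8nk}$, grid-rounding by relative convex hulls, the orthogonal-case dynamic program with length bucketing, and the matching $O\!\left(\frac{n^{3k+3}}{\varepsilon^{2k+1}}\log n\right)$ running time. No substantive differences to note.
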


\section{Multiwatchman Routes With a Variable Number of Watchmen}
\subsection{Seeing all of $P$}
The idea of localizing an optimal set of routes $\{\gamma_i\}_{i=1,\ldots, k}$ with a geodesic disk whose radius is no larger than $O\left(\max\limits_{i=1,\ldots,k}|\gamma_i|\right)$ gives us a simple constant-factor approximation. Similar to the previous section, suppose $r$ is such that $\frac{r}{2}\le\frac{\max\limits_{i=1,\ldots,k}|\gamma_i|}{2}\le r$, then the geodesic disk of radius $r$ centered on $s$, $GD_s(r)$ sees all of $P$. Let $\gamma \subset GD_s(r)$ be the shortest route that sees all of $P$ within $GD_s(r)$. Note that $\gamma$ is not necessarily the overall shortest watchman route of $P$, but we can still compute $\gamma$ using the polygon touring algorithm~\cite{dror2003touring, tan2018touring}. Any essential cut $c$ must intersect $GD_s(r)$, a relatively convex set, in a contiguous segment $c'$. By applying the polygon touring algorithm on the segments that are intersection of the essential cuts with $GD_s(r)$, we obtain $\gamma$. Divide $\gamma$ into $k$ subpaths of equal length, each of which is bounded by $a_i, a_{i+1}\in \gamma$ ($a_1 \equiv s \equiv a_{k+1}$) and denoted by $\gamma_{a_ia_{i+1}}$. Finally return the collection of routes $\{\gamma'_i\}_{i=1,\ldots, k}, \gamma'_i = \pi(s,a_i)\cup\gamma_{a_ia_{i+1}}\cup\pi(a_{i+1},s)$.

The approximation algorithm can be summarized in the following steps:
\begin{enumerate}
    \item[Step 1:] Compute $r_{\min}$, the smallest value for which $GD_s(r_{\min})$ sees all of $P$. Set $r:=r_{\min}$.
    \item[Step 2:] If $GD_s(r)$ intersects all essential cuts, compute $\gamma$, the shortest route within $GD_s(r)$ that sees all of $P$ using the polygon touring algorithm.
    \item[Step 3:] Divide $\gamma$ into $k$ subpaths of equal length, each bounded by $a_i, a_{i+1}\in \gamma$ ($a_1 \equiv s \equiv a_{k+1}$) and denoted by $\gamma_{a_ia_{i+1}}$.
    \item[Step 4:] For each $i$, we obtain $\gamma_i'$ by joining $\pi(s,a_i), \gamma_{a_ia_{i+1}}$ and $\pi(a_{i+1},s)$.
    \item[Step 5:] Set $r:=2r$, then repeat from Step 2, until $r > 6nr_{min}$.
\end{enumerate}
Finally, we return the collection of routes $\{\gamma_i'\}_{i=1,\ldots, k}$ that minimizes $\max\limits_{i = 1, \ldots, k}|\gamma_i'|$ out of all collections from all values of $r$ in the doubling search. 
\begin{theorem}
    The above algorithm has an approximation factor of 3.
\end{theorem}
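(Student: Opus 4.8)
The plan is to show that the collection $\{\gamma_i'\}_{i=1,\ldots,k}$ produced in the ``good'' iteration of the doubling search --- the one for which $r$ lands in the window $(\tfrac12\mathrm{OPT},\,\mathrm{OPT}]$, where I write $\mathrm{OPT}=\max_{i=1,\ldots,k}|\gamma_i|$ --- already satisfies $\max_i|\gamma_i'|\le 3\,\mathrm{OPT}$. Since the algorithm returns the best collection over all values of $r$ in the search, bounding this one iteration suffices.

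First I would verify that such an iteration exists and is well behaved. Because every $\gamma_i$ is a closed tour through $s$ of length at most $\mathrm{OPT}$, each of its points lies within geodesic distance $\mathrm{OPT}/2$ of $s$, so $\bigcup_i\gamma_i\subseteq GD_s(\mathrm{OPT}/2)$; consequently $GD_s(\mathrm{OPT}/2)$ sees all of $P$, which gives $r_{\min}\le \mathrm{OPT}/2$. Combined with $\mathrm{OPT}\le 6nr_{\min}$ (Lemma~\ref{lem:bounding_general_poly_see_everything}), the doubling sequence $r_{\min},2r_{\min},4r_{\min},\ldots$ must contain a value $r\in(\mathrm{OPT}/2,\mathrm{OPT}]$, and this value is reached before the stopping condition $r>6nr_{\min}$. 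For this $r$ we have $GD_s(r)\supseteq GD_s(\mathrm{OPT}/2)\supseteq\bigcup_i\gamma_i$, so $GD_s(r)$ intersects every essential cut and the route $\gamma$ (passing through $s$) is computed in this iteration.

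The crux is bounding $|\gamma|$. Since a geodesic disk is relatively convex and $\bigcup_i\gamma_i\subseteq GD_s(r)$, the relative convex hull $H$ of $\bigcup_i\gamma_i$ lies inside $GD_s(r)$; it contains every $\gamma_i$ and hence sees all of $P$, and its boundary passes through $s\in\partial P$, so $\partial H$ is a feasible competitor for the shortest such route $\gamma$, giving $|\gamma|\le|\partial H|$. As noted in Section~\ref{sec:prelims}, because the $\gamma_i$ are all connected through $s$ one can traverse $\bigcup_i\gamma_i$ by a single closed walk of length $\sum_i|\gamma_i|$, and taking the relative convex hull of a closed curve does not increase its length; hence $|\partial H|\le\sum_i|\gamma_i|\le k\cdot\mathrm{OPT}$. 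Therefore $|\gamma|\le k\cdot\mathrm{OPT}$, and each of the $k$ equal pieces satisfies $|\gamma_{a_ia_{i+1}}|=|\gamma|/k\le\mathrm{OPT}$.

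Finally I would assemble the bound for each returned route. Since $a_i,a_{i+1}\in\gamma\subseteq GD_s(r)$ have geodesic distance at most $r\le\mathrm{OPT}$ from $s$, the connecting paths satisfy $|\pi(s,a_i)|,|\pi(a_{i+1},s)|\le r\le\mathrm{OPT}$, so $|\gamma_i'|=|\pi(s,a_i)|+|\gamma_{a_ia_{i+1}}|+|\pi(a_{i+1},s)|\le \mathrm{OPT}+\mathrm{OPT}+\mathrm{OPT}=3\,\mathrm{OPT}$ for every $i$. Feasibility is immediate, since $\bigcup_i\gamma_i'\supseteq\bigcup_i\gamma_{a_ia_{i+1}}=\gamma$, which sees all of $P$. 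The step I expect to require the most care is the bound $|\gamma|\le k\cdot\mathrm{OPT}$: it relies both on the relative convexity of $GD_s(r)$ (to keep $H$ inside the disk so that $\partial H$ is genuinely a competitor for $\gamma$) and on the claim that the minimum-perimeter hull of the connected union is no longer than $\sum_i|\gamma_i|$, which must be justified through the closed-walk together with the convex-hull-shortening argument rather than assumed.
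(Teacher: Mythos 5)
Your proposal is correct and follows essentially the same route as the paper's proof: you isolate the doubling iteration with $r$ in the window $\left(\frac{1}{2}\max_i|\gamma_i|,\ \max_i|\gamma_i|\right]$, establish $\frac{|\gamma|}{k}\le\max_i|\gamma_i|$ via the relative-convex-hull (min-sum) argument that the paper invokes implicitly as its first inequality, and then bound each returned route by $\frac{|\gamma|}{k}+2r\le 3\max_i|\gamma_i|$. The additional details you supply --- that the good iteration occurs before the stopping condition $r>6nr_{\min}$, that the hull stays inside the relatively convex disk $GD_s(r)$ and passes through $s$, so its boundary is a genuine competitor for $\gamma$ --- are exactly the steps the paper's two-sentence proof leaves unstated, not a different argument.
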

\begin{proof}
    First, note that $\frac{|\gamma|}{k}\le \max\limits_{i = 1, \ldots, k}|\gamma_i| \le |\gamma|$.
    
    On the other hand, the geodesic distance from any point on $\gamma$ to $s$ is no longer than $r$. Thus, when $r \le \max\limits_{i=1,\ldots,k}|\gamma_i| \le 2r$, any $\gamma'_i$ returned by the algorithm is no longer than $\frac{|\gamma|}{k} + 2r \le  3\max\limits_{i=1,\ldots,k}|\gamma_i|$.
\end{proof}

\subsubsection{Analysis of running time} We determine $r_{\min}$ by computing the geodesic distance from $s$ to any of the $O(n)$ essential cuts in $O(n^2)$ time. For each value of $r$, we execute the $O(n^3)$ polygon touring algorithm~\cite{tan2018touring} and compute the $k$-collection $\{\gamma'_i\}_{i=1,\ldots, k}$ in $O(nk)$ time. The doubling search for $r$ takes $O(\log n)$ iterations (Lemma~\ref{lem:bounding_general_poly_see_everything}), thus the algorithm incurs a total running time of $O\left((n^3 + nk)\log n\right)$.

\subsubsection{Improving the approximation factor} In the approximation algorithm earlier, we gradually expand $GD_s(r)$ until $GD_s(r)$ contains an optimal $\{\gamma_i\}_{i=1,\ldots, k}$. If in each iteration, we instead multiply $r$ by a smaller positive number, such as $(1 + \varepsilon)$, then at some point $\frac{r}{(1 + \varepsilon)} \le \frac{\max\limits_{i=1,\ldots,k}|\gamma_i|}{2} \le r$. The distance from each point on $\gamma$ to $s$ is then no greater than $(1 + \varepsilon)\frac{\max\limits_{i=1,\ldots,k}|\gamma_i|}{2}$. Hence, the length of any of the $k$ routes returned by the approximation algorithm is bounded by $\frac{|\gamma|}{k} + 2(1 + \varepsilon)\frac{\max\limits_{i=1,\ldots,k}|\gamma_i|}{2} \le (2 + \varepsilon)\max\limits_{i=1,\ldots,k}|\gamma_i|$.

There is however, a trade-off between the approximation factor and the number of iterations of the multiplicative search for $r$. If we multiply $r$ by $(1 + \varepsilon)$ each time, the search requires $O(\log_{1+\varepsilon}n)$ iterations. Note that

\[\log_{1+\varepsilon}n = \log n \frac{\ln2}{\ln(1+\varepsilon)} = \log nO\left(\frac{1}{\varepsilon}\right).\]
In summary, we can achieve an approximation ratio of $(2 + \varepsilon)$ with a running time of $O\left(\frac{(n^3 + nk)\log n}{\varepsilon}\right)$.

\subsection{Seeing a quota of area within $P$}
Given an area quota $A$, let $\{\gamma_i\}_{i=1,\ldots, k}$ be such that $\left|\bigcup\limits_{i = 1,\ldots, k}V(\gamma_i)\right| \ge A$ and $\max\limits_{i=1,\ldots,k}|\gamma_i|$ is minimum. To discretize $P$ and localize $\{\gamma_i\}_{i=1,\ldots, k}$, we employ a technique similar as before, only more generalized. Instead of computing the smallest geodesic disk centered at $s$ that sees all of $P$, we compute the smallest disk that sees $A$, the given quota of area. Let $r_{\min}$ be the smallest value of $r$ such that $|V(GD_s(r)| = A$. We compute $r_{\min}$ using the ``visibility wave'' method in~\cite{quickest}. Consider the visibility graph of $P$, whose nodes are vertices of $P$ and two nodes are adjacent if they see one another. We sort the geodesic distance from $r$ to every visibility edge in increasing order in $O(n^2\log n)$ time. Then, propagate a ``wavefront'' through the sequences of visibility edges hit for the first time in the process of increasing $r$, stop when $|V(GD_s(r)| = A$. 

\begin{lemma}
    \label{lem:bounding_general_poly_quota}
    $r_{\min} \le \max\limits_{i=1,\ldots,k}|\gamma_i| = O(nr_{\min})$.
\end{lemma}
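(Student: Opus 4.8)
The lower bound $r_{\min} \le \max_{i}|\gamma_i|$ is the quota analogue of the first half of Lemma~\ref{lem:bounding_general_poly_see_everything}, and I expect the same argument to carry over essentially verbatim. The plan is to observe that the geodesic disk $GD_s\bigl(\frac{\max_i |\gamma_i|}{2}\bigr)$ contains the entire collection $\{\gamma_i\}_{i=1,\ldots,k}$, since every route starts at $s$ and has length at most $\max_i|\gamma_i|$, so no point of any $\gamma_i$ is at geodesic distance more than $\frac{\max_i|\gamma_i|}{2}$ from $s$. Consequently $V\bigl(GD_s(\frac{\max_i|\gamma_i|}{2})\bigr) \supseteq \bigcup_i V(\gamma_i)$, which has area at least $A$. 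Since $r_{\min}$ is by definition the smallest radius whose geodesic disk sees area exactly $A$ (and the seen area is monotone nondecreasing in $r$), this forces $r_{\min} \le \frac{\max_i|\gamma_i|}{2} \le \max_i|\gamma_i|$.

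For the upper bound $\max_i|\gamma_i| = O(nr_{\min})$, the natural strategy is to exhibit a single feasible route of length $O(nr_{\min})$ that already meets the quota, since one watchman traveling that route and $k-1$ degenerate routes of length $0$ form a feasible solution, giving $\max_i|\gamma_i| \le L'$ where $L'$ is the length of a shortest single \emph{quota} watchman route. I would take $\gamma$ to be the boundary $\partial GD_s(r_{\min})$ together with a there-and-back geodesic spur from $s$ to that boundary, exactly as in the proof of Lemma~\ref{lem:bounding_general_poly_see_everything}. Because $\partial GD_s(r_{\min})$ already sees area $A$, this $\gamma$ is a feasible quota route. The length bound then reuses the structural decomposition of $\partial GD_s(r_{\min})$ into circular arcs (total length $\le 2\pi r_{\min}$) and at most $n$ polygonal-chain segments, each of length $\le 2r_{\min}$ by the triangle inequality on geodesic distances to $s$, plus the spur of length $\le 2r_{\min}$; summing gives $|\gamma| \le 2nr_{\min} + 2\pi r_{\min} + 2r_{\min} = O(nr_{\min})$.

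The one point that genuinely needs care — and which I expect to be the main subtlety rather than the arithmetic — is the monotonicity underlying the lower bound and the definition of $r_{\min}$ in the quota setting. With full visibility, the area seen is trivially monotone and once it equals $|P|$ it stays there, so ``smallest radius that sees everything'' is clean. For a quota $A < |P|$, I must confirm that $|V(GD_s(r))|$ is continuous and nondecreasing in $r$ (it is, since $GD_s(r)$ grows monotonically with $r$ and visibility regions are monotone under set inclusion), so that the equation $|V(GD_s(r_{\min}))| = A$ has a well-defined smallest solution and that any disk containing $\{\gamma_i\}$ with seen-area $\ge A$ must have radius $\ge r_{\min}$. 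Once this monotonicity is pinned down, both inequalities follow from the same geometric estimates already used for the complete-visibility case, and no orthogonality or full-coverage structure is needed.
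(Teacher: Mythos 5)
Your proposal is correct and follows essentially the same route as the paper: the paper's proof of Lemma~\ref{lem:bounding_general_poly_quota} is simply ``identical to that of Lemma~\ref{lem:bounding_general_poly_see_everything},'' which is exactly the adaptation you spell out (disk of radius $\frac{1}{2}\max_i|\gamma_i|$ contains the routes and hence sees area at least $A$ for the lower bound; $\partial GD_s(r_{\min})$ plus a there-and-back spur as a feasible single quota route of length at most $6nr_{\min}$ for the upper bound). Your explicit check that $|V(GD_s(r))|$ is monotone and continuous in $r$, so that $r_{\min}$ is well defined in the quota setting, is a detail the paper leaves implicit but does not change the argument.
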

\begin{proof}
    Identical to that of Lemma~\ref{lem:bounding_general_poly_see_everything}.
\end{proof}

Lemma~\ref{lem:bounding_general_poly_quota} allows us to find $r$ such that $r\le\max\limits_{i=1,\ldots,k}|\gamma_i|\le 2r$ by a doubling search from $r = r_{\min}$ in $O(\log n)$ iterations. Suppose $\max\limits_{i=1,\ldots,k}|\gamma_i|\le 2r$, then $GD_s(r)$ contains $\{\gamma_i\}_{i=1,\ldots, k}$. We compute (approximately) a shortest tour within $GD_s(r)$ that sees at least an area of $A$, which we denote by $\gamma$. Such a route must exist, since $\{\gamma_i\}_{i=1,\ldots, k}$ collectively see an area no smaller than $A$.
\begin{lemma}
\label{lem:budget_watchman}
    \cite[Section 3]{huynh_et_al:LIPIcs.SWAT.2024.27}
    Given a budget $B \ge 0$ and any $\varepsilon > 0$, there exists an $O\left(\frac{n^5}{\varepsilon^6}\right)$ algorithm that computes a route of length at most $(1 + \varepsilon)B$ seeing as much area as any route of length $B$ within $GD_s(r)$.
\end{lemma}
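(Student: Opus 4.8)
The plan is to adapt the FPTAS for the single-watchman \textsc{Quota Watchman Route} problem sketched in Section~\ref{sec:prelims} to the \emph{budgeted} (dual) formulation, in which length is capped and area is maximized, and to run it on the relatively convex domain $GD_s(r)$ rather than on all of $P$. The three structural ingredients I would rely on are already available in the excerpt: (i) an optimal partial visibility covering tour may be taken to be relatively convex, i.e. its boundary and interior are closed under geodesic shortest paths; (ii) the grid/triangulation/cut discretization used for the general simple polygon reduces the candidate turning points of the tour to a discrete set of $O(n^2/\varepsilon)$ grid points on the essential cuts while inflating length by at most a factor $(1+\varepsilon)$; and (iii) geodesic shortest paths between these grid points can be precomputed.

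I would set up the dynamic program as for a single route, sweeping the essential cuts in their cyclic order around $\partial P$ starting and ending at $s$ (reusing the machinery behind Lemma~\ref{lem:discrete_viewpoints}). A subproblem is indexed by a grid point $p$ on the current cut together with a discretized value of the length used so far, and its value is the maximum area the tour can have seen up to that point. The transition concatenates a geodesic shortest path from the previous grid point $p'$ to $p$, adding $|\pi(p',p)|$ to the length and the \emph{incremental} area newly seen by that arc. Crucially, relative convexity is what makes the area accumulate additively: because the tour is closed under shortest paths, $GD_s(r)$ can be charged to the tour so that the visibility contributions of disjoint arcs do not double count, and the total area seen equals the sum of the per-arc increments the DP adds. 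This is the separation of visibility regions of different intervals of the tour alluded to in the preliminaries, and it is what licenses a left-to-right DP on area at all.

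To keep the table polynomial I would discretize the budget into $O(n/\varepsilon)$ uniform buckets, exactly as in the inequalities \eqref{inq:lower_bound}--\eqref{inq:rounding}, so that summing the rounding error over the $O(n)$ arcs of the tour costs only an additive $\varepsilon B$ in length; the reported tour then has length at most $(1+\varepsilon)B$. The area values themselves are computed only approximately, through the grid resolution and through an area-bucketing step, and one reads off the answer as the maximum stored area over all terminal states of length at most $(1+\varepsilon)B$. Choosing the grid side length, the number of length buckets, and the number of area buckets each as a suitable polynomial in $n$ and $1/\varepsilon$ yields the claimed $O(n^5/\varepsilon^6)$ bound; the exponent $6$ on $1/\varepsilon$ reflects the composition of these several independent $\varepsilon$-discretizations.

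The main obstacle, and the step I would spend the most care on, is the additive area decomposition under relative convexity: one must prove that the region seen by the whole tour can be partitioned so that each maximal arc between two consecutive contact points with essential cuts is credited a well-defined, locally computable area that is independent of the rest of the tour, with no overlap. Without this the DP would have to remember the entire swept region rather than a scalar area, destroying polynomiality. Establishing it, together with showing that rounding the area values preserves a $(1+\varepsilon)$ guarantee when combined with the length rounding, is the technical heart of the result, and is precisely what is carried out in~\cite{huynh_et_al:LIPIcs.SWAT.2024.27}.
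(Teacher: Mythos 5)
There is a genuine gap, and it is exactly the pitfall the paper warns about in Section~\ref{sec:prelims}: for the budgeted (length-capped, area-maximizing) problem, the touring-polygons structure your DP relies on does not exist. You restrict the candidate turning points to grid points \emph{on the essential cuts} and sweep the cuts in their cyclic order around $\partial P$, reusing the machinery of Lemma~\ref{lem:discrete_viewpoints}. That machinery is only valid when the route must see \emph{all} of $P$, because then Lemma~\ref{lem:visit_all_essential_cuts} and Corollary~\ref{cor:shortest} force the route to visit every essential cut, and its contact points with the cuts fully determine it. A route of length $B$ that maximizes visible area need not touch any essential cut --- it may only look ``partially'' around corners, as the preliminaries explicitly note. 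Concretely, if $B$ is small, the best route wiggles near $s$ and sees a large area while visiting no cut; every non-degenerate tour in your state space must reach an essential cut, so your DP can only return the degenerate tour at $s$ and its area can be arbitrarily far from optimal. For the same reason your proposed area decomposition, credited to ``maximal arcs between consecutive contact points with essential cuts,'' is not even defined for the routes that matter.

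The algorithm the paper sketches (following \cite{huynh_et_al:LIPIcs.SWAT.2024.27}) differs on precisely these two points. First, the candidate vertex set $S_{\delta,r}$ consists of \emph{all} vertices of the cells of the grid/triangulation/cut decomposition that overlap $GD_s(r)$, not only points on cuts; the enclosing relative convex hull of the cells containing the vertices of $\gamma_B$ then gives a grid-rounded tour of length at most $(1+\varepsilon)B$ seeing at least as much. Second, the topological order for the DP is the \emph{geodesic angular order} of $S_{\delta,r}$ around $s$ (valid because an optimal partial-coverage tour may be taken relatively convex), and the additive area bookkeeping you rightly identify as the crux is done per DP transition, not per arc between cuts: extending a relatively convex chain ending at $s_i$ by the segment $s_is_j$ gains exactly $|V(s_is_j)\setminus V(\pi(s,s_i))|$ new area, a quantity depending only on $s$, $s_i$, $s_j$. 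This locality is what keeps the DP value a scalar; the area is stored as the value of the state $(s_j,\overline{B})$, so no separate area bucketing is needed --- only the length is discretized, as in your proposal.
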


We briefly describe the algorithm for completeness, and refer the readers to~\cite{huynh_et_al:LIPIcs.SWAT.2024.27} for more details. First, triangulate $P$, including $s$ as a vertex of the triangulation. Then, overlay onto the triangulation a regular square grid of side lengths $\delta = O\left(\frac{\varepsilon B}{n}\right)$ within an axis aligned square of size $B$-by-$B$ centered at $s$.

We consider the set of (convex) cells of the decomposition that overlap (both fully and partially) with $GD_s(r)$ and their vertices, $S_{\delta, r}$. Let $\gamma_B$ be the $B$-length route within $GD_s(r)$ that achieves the most area of visibility. Then, there exists a route of length at most $(1 + \varepsilon)B$ with vertices coming from $S_{\delta,r}$ enclosing $\gamma_B$, i.e. the boundary of the relative convex hull of the cells containing vertices of $\gamma_B$, thus seeing at least as much area as $\gamma_B$ (Figure~\ref{fig:approx_gamma_prime}).

\begin{figure}[h]
    \centering
    \includegraphics[width=0.8\textwidth]{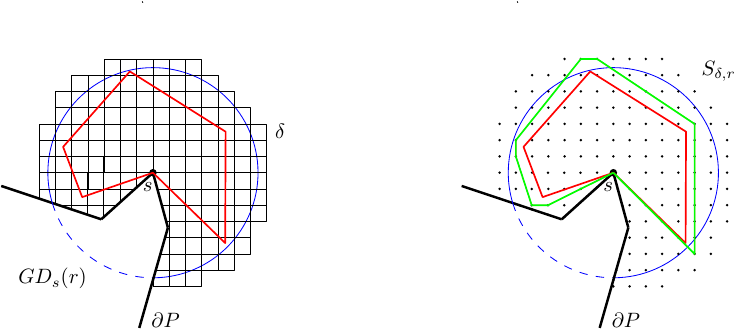}
    \caption{Left: $\gamma_B$ (red) is a tour no longer than $B$ within $C_g(r)$ (blue) that sees the most area. Right: enclosing $\gamma_B$ with a tour whose vertices are in $S_{\delta, r}$ seeing everything $\gamma_B$ sees (green).}
    \label{fig:approx_gamma_prime}
\end{figure}

The dynamic programming algorithm for the \textsc{Budgeted Watchman Route} problem can compute a relatively convex closed curve that sees the most with vertices among $S_{\delta,r}$. Sort $S_{\delta,r} = \{s_1, s_2, \ldots\}$ in (geodesic) angular order around $s$ to acquire a topological ordering of the subproblems. A subproblem $(s_j, \overline{B})$ is defined by a point in $S_{\delta,r}$ and a length. Let $a(s_j, \overline{B})$ be the maximum area of visibility that a relatively convex polygonal chain from $s$ to $s_j$ of length no greater than $\overline{B}$ can achieve, then
\[a(s_j, \overline{B}) = \max\limits_{i\le j} a(s_i, \overline{B} - |s_is_j|) + |V(s_is_j) \setminus V(\pi(s, s_i))|\]
(recall that $\pi(s,s_i)$ is the geodesic shortest path between $s$ and $s_i$).

Thus, if $|\gamma| \le B\le \alpha|\gamma|$ for some $\alpha \ge 1$, we can compute a route $\gamma'$ of length no longer than $\alpha(1 + \varepsilon)|\gamma|$ with vertices in $S_{\delta,r}$ that sees the most area, which must be no smaller than $|V(\gamma_B)| \ge |V(\gamma)| \ge A$. We show how to acquire a polynomial-sized set of values, from which we can ``guess'' $B$ so that $B$ is close to $|\gamma|$ (in particular, $|\gamma| \le B\le (1 + \varepsilon)|\gamma|$).

\begin{lemma}
\label{lem:bounding_quota_general}
    $r_{\min}\le |\gamma| \le 6nr_{\min}$.
\end{lemma}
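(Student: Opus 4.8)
The plan is to bound $|\gamma|$, the length of the shortest tour inside $GD_s(r)$ seeing at least area $A$, in essentially the same manner as Lemma~\ref{lem:bounding_general_poly_see_everything}, but now with the single-watchman object being a quota-covering tour rather than a full-coverage tour. Note that this lemma serves the FPTAS-style ``guessing'' of the budget $B$: the eventual goal is a polynomial-size candidate set of values for $B$ satisfying $|\gamma|\le B\le(1+\varepsilon)|\gamma|$, and such a set is obtained by doubling (or $(1+\varepsilon)$-stepping) between the lower and upper bounds established here.

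First I would establish the lower bound $r_{\min}\le|\gamma|$. By definition $r_{\min}$ is the smallest radius for which $GD_s(r_{\min})$ sees area exactly $A$. Since $\gamma$ is a closed tour through $s$ of length $|\gamma|$, every point of $\gamma$ lies within geodesic distance $\frac{|\gamma|}{2}$ of $s$, so $\gamma\subseteq GD_s\!\left(\frac{|\gamma|}{2}\right)$, whence $V(\gamma)\subseteq V\!\left(GD_s\!\left(\frac{|\gamma|}{2}\right)\right)$. Because $\gamma$ sees at least area $A$, the disk $GD_s\!\left(\frac{|\gamma|}{2}\right)$ also sees at least area $A$; minimality of $r_{\min}$ then forces $r_{\min}\le\frac{|\gamma|}{2}\le|\gamma|$. (Here I am using that $\gamma$ is sought within $GD_s(r)$ for an $r$ satisfying $r\le\max_i|\gamma_i|\le 2r$, but the lower bound argument only uses that $\gamma$ itself is short.)

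For the upper bound $|\gamma|\le 6nr_{\min}$, I would exhibit a \emph{specific} feasible tour of length at most $6nr_{\min}$ and invoke minimality of $|\gamma|$. The natural candidate is the boundary walk of the geodesic disk $GD_s(r_{\min})$ itself: by definition $V(GD_s(r_{\min}))$ has area exactly $A$, and $\partial GD_s(r_{\min})$ sees everything $GD_s(r_{\min})$ does (since the disk is relatively convex and hole-free, every visible point is seen from the boundary). A watchman can traverse $\partial GD_s(r_{\min})$ and return to $s$ by a spur of length at most $2r_{\min}$ out and back, giving a closed tour of length at most $|\partial GD_s(r_{\min})|+2r_{\min}$. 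Reusing the structural estimate from Lemma~\ref{lem:bounding_general_poly_see_everything} verbatim---$\partial GD_s(r_{\min})$ is a union of at most $n$ polygonal segments, each of length at most $2r_{\min}$ by the triangle inequality, plus circular arcs of total length at most $2\pi r_{\min}$---we get $|\partial GD_s(r_{\min})|+2r_{\min}\le 2nr_{\min}+2\pi r_{\min}+2r_{\min}\le 6nr_{\min}$. Since $\gamma$ is the shortest quota-covering tour within $GD_s(r)$ and this boundary tour is one such feasible tour (it lies in $GD_s(r_{\min})\subseteq GD_s(r)$), we conclude $|\gamma|\le 6nr_{\min}$.

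The main obstacle, and the only genuine subtlety beyond quoting Lemma~\ref{lem:bounding_general_poly_see_everything}, is the containment direction needed to declare the boundary tour \emph{feasible} for the quota problem: I must ensure the exhibited tour lies inside the $GD_s(r)$ in which $\gamma$ is being optimized, and that it genuinely achieves area $A$. The former holds because $r_{\min}\le\frac{1}{2}\max_i|\gamma_i|\le r$, so $GD_s(r_{\min})\subseteq GD_s(r)$; the latter holds because $\partial GD_s(r_{\min})$ sees exactly the same region as $GD_s(r_{\min})$, whose visible area is $A$ by the definition of $r_{\min}$. With these two containments in hand the rest is the already-proven geometric length bound on $\partial GD_s(r_{\min})$, so the proof is genuinely ``identical to that of Lemma~\ref{lem:bounding_general_poly_see_everything}'' modulo replacing ``sees all of $P$'' with ``sees area $A$.''
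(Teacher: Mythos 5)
Your proof is correct and is exactly the argument the paper intends: the paper's own proof is literally ``Identical to that of Lemma~\ref{lem:bounding_general_poly_see_everything},'' and your write-up is a faithful instantiation of that argument for the quota setting (lower bound via $\gamma\subseteq GD_s(|\gamma|/2)$ and minimality of $r_{\min}$, upper bound via the feasible tour $\partial GD_s(r_{\min})$ plus a spur from $s$, with the same $2nr_{\min}+2\pi r_{\min}+2r_{\min}\le 6nr_{\min}$ length count). The details you add beyond the paper---that $V(\partial GD_s(r_{\min}))=V(GD_s(r_{\min}))$ in a simple polygon and that $GD_s(r_{\min})\subseteq GD_s(r)$ ensures feasibility---are exactly the right points to check when porting the earlier proof.
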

\begin{proof}
    Identical to that of Lemma~\ref{lem:bounding_general_poly_see_everything}.
\end{proof}
We divide $6nr_{min}$ into $\lceil\frac{6n}{\varepsilon}\rceil$ uniform intervals so that each is no longer than $\varepsilon r_{min}$: the smallest interval endpoint that is no smaller than $|\gamma|$ must also be no larger than $(1 + \varepsilon)|\gamma|$, and hence is the value of $B$ that we desire. We perform a binary search on the values $\left\{0, \frac{6nr_{min}}{\lceil\frac{6n}{\varepsilon}\rceil}, \ldots, 6nr_{min} \right\}$ as the input budget for the algorithm in Lemma~\ref{lem:budget_watchman}, and seek out the smallest value for which the output route $\gamma'$ sees an area no smaller than $A$\@. Clearly, $|\gamma'|\le (1+\varepsilon)^2|\gamma|$.

We are now ready to describe the approximation algorithm for the \textsc{Quota $k$-Watchmen Routes} problem as follows:
\begin{itemize}
    \item[Step 1:] Set $r:= r_{\min}$.
    \item[Step 2:] Compute $\gamma'$, a $(1 + \varepsilon)^2$-approximation to $\gamma$.
    \item[Step 3:] Divide $\gamma'$ into $k$ subpaths of equal length, each bounded by $a_i, a_{i+1}\in \gamma'$ ($a_1 \equiv s \equiv a_{k+1}$) and denoted by $\gamma'_{a_ia_{i+1}}$.
    \item[Step 4:] For each $i$, we obtain $\gamma_i'$ by joining $\pi(s,a_i), \gamma'_{a_ia_{i+1}}$ and $\pi(a_{i+1},s)$.
    \item Step 5: Set $r:=2r$, then repeat from Step 2, until $r > 6nr_{min}$.
\end{itemize}
Return the collection of routes $\{\gamma_i'\}_{i=1,\ldots, k}$ that minimizes $\max\limits_{i = 1, \ldots, k}|\gamma_i'|$. 

    We argue that the algorithm described above has an approximation factor of $3 + \varepsilon$, where $\varepsilon$ is arbitrarily close to 0. First. we once again note that $\frac{|\gamma|}{k}\le \max\limits_{i = 1, \ldots, k}|\gamma_i'| \le |\gamma|$.
    
    Now, since all our choices for $B$ are no larger than $6nr_{\min}$, we can choose an appropriate $\delta = O\left(\frac{\varepsilon B}{n}\right)$ so that the geodesic distance from any point on $\gamma'$ to $s$ is no longer than $r + \varepsilon r$. Thus, when $\frac{r}{2} \le \frac{\max\limits_{i=1,\ldots,k}|\gamma_i|}{2} \le r$, any one of the $k$ routes returned by the algorithm is no longer than $\frac{|\gamma'|}{k} + 2r + 2\varepsilon r \le [(1 + \varepsilon)^2 + 2 + 2\varepsilon]\max\limits_{i=1,\ldots,k}|\gamma_i| = (3 + \varepsilon')\max\limits_{i=1,\ldots,k}|\gamma_i|$, where $\varepsilon' = 4\varepsilon + \varepsilon^2$. Since $\frac{1}{\varepsilon} = \Theta\left(\frac{1}{\varepsilon'}\right)$ as $\varepsilon$ and $\varepsilon'$ approach 0, the running time is in the same order when written in terms of either $\varepsilon$ or $\varepsilon'$.

For each choice of $B$, we execute the $O\left(\frac{n^5}{\varepsilon^6}\right)$ algorithm, thus computing an approximation to $\gamma'$ for each value of $r$ takes $O\left(\frac{n^5}{\varepsilon^6}\log\left(\frac{n}{\varepsilon}\right)\right)$ time. Deriving the collection $\{\gamma_i'\}_{i=1,\ldots, k}$ takes $O(nk)$ time. Since there are $O(\log n)$ iterations of the doubling search for $r$, the overall running time is $O\left(\left(\frac{n^5}{\varepsilon^6}\log\left(\frac{n}{\varepsilon}\right) + nk\right)\log n\right)$.

Similar to the $k$-\textsc{Watchman Routes} problem, we can improve the approximation, in particular, to $(2 + \varepsilon)$ by multiplying $r$ with $(1 + \varepsilon)$ instead of 2 in the multiplicative search. The running time in that case, is $O\left(\left(\frac{n^5}{\varepsilon^6}\log\left(\frac{n}{\varepsilon}\right) + nk\right)\frac{\log n}{\varepsilon}\right)$.

\begin{theorem}
    Both the anchored $k$-\textsc{Watchman Routes} problem and the anchored \textsc{Quota $k$-Watchman Routes} problem in a simple polygon have a polynomial-time $(2+\varepsilon)$ approximation.
\end{theorem}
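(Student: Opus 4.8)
The plan is to treat both problems through the single unifying framework already developed: localize an optimal solution inside a geodesic disk $GD_s(r)$, compute one (near-)optimal single route $\gamma$ inside that disk that meets the coverage or quota requirement, split $\gamma$ into $k$ equal-length arcs, and attach to each arc a round trip $\pi(s,a_i)\cup\gamma_{a_ia_{i+1}}\cup\pi(a_{i+1},s)$. The heart of the argument is a pair of two-sided bounds established separately for each problem, after which the approximation ratio falls out by arithmetic. The only genuinely new ingredient beyond the factor-$3$ versions is the multiplicative search that scales $r$ by $(1+\varepsilon)$ rather than by $2$.

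First, for the $k$-\textsc{Watchman Routes} problem I would fix $r$ in the regime $\frac{\max_i|\gamma_i|}{2}\le r\le(1+\varepsilon)\frac{\max_i|\gamma_i|}{2}$, which the $(1+\varepsilon)$-multiplicative search reaches because the admissible range of $r$ is $[r_{\min},6nr_{\min}]$ by Lemma~\ref{lem:bounding_general_poly_see_everything}. In this regime $GD_s(r)$ contains every $\gamma_i$ and hence sees all of $P$, so the single route $\gamma$ computed by the touring-polygons algorithm inside $GD_s(r)$ is well defined. The key upper bound $|\gamma|\le k\max_i|\gamma_i|$ comes from the relative convex hull of $\bigcup_i\gamma_i$: it is connected through $s$, stays inside $GD_s(r)$ since geodesic disks in a simple polygon are relatively convex, sees all of $P$, and has perimeter at most $\sum_i|\gamma_i|\le k\max_i|\gamma_i|$. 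Thus $\frac{|\gamma|}{k}\le\max_i|\gamma_i|$, while every point of $\gamma$ lies within geodesic distance $r$ of $s$, so each returned route has length at most $\frac{|\gamma|}{k}+2r\le\max_i|\gamma_i|+(1+\varepsilon)\max_i|\gamma_i|=(2+\varepsilon)\max_i|\gamma_i|$.

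For the \textsc{Quota $k$-Watchman Routes} problem I would run the identical split-and-attach scheme with two modifications to absorb. The radius is now calibrated to area via $r_{\min}=\min\{r:|V(GD_s(r))|=A\}$ and bounded by Lemma~\ref{lem:bounding_general_poly_quota}, again giving an $O(\tfrac{\log n}{\varepsilon})$-iteration search. The single route $\gamma$ is now only computed approximately: invoking Lemma~\ref{lem:budget_watchman} on a budget $B$ guessed from the $O(\tfrac{n}{\varepsilon})$-sized grid over $[0,6nr_{\min}]$ yields $\gamma'$ with $|\gamma'|\le(1+\varepsilon)^2|\gamma|$ while still seeing area at least $A$; the same relative-convex-hull argument gives $\frac{|\gamma|}{k}\le\max_i|\gamma_i|$, hence $\frac{|\gamma'|}{k}\le(1+\varepsilon)^2\max_i|\gamma_i|$. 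Moreover the $\delta$-discretization makes the geodesic distance from any point of $\gamma'$ to $s$ at most $(1+\varepsilon)r$, so each return trip costs at most $(1+\varepsilon)r$. Plugging these in gives each route bounded by $\frac{|\gamma'|}{k}+2(1+\varepsilon)r\le(1+\varepsilon)^2\max_i|\gamma_i|+(1+\varepsilon)^2\max_i|\gamma_i|=2(1+\varepsilon)^2\max_i|\gamma_i|$, i.e.\ $(2+\varepsilon')\max_i|\gamma_i|$ with $\varepsilon'=4\varepsilon+2\varepsilon^2$; since $\tfrac1\varepsilon=\Theta(\tfrac1{\varepsilon'})$ the claimed ratio and running time follow.

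The step I expect to require the most care is the quota variant's budget guessing: one must verify that the geometric grid of candidate budgets always contains a value $B$ with $|\gamma|\le B\le(1+\varepsilon)|\gamma|$, which uses $|\gamma|\ge r_{\min}$ from Lemma~\ref{lem:bounding_quota_general} so that the grid spacing $\varepsilon r_{\min}$ is fine enough, and one must check that the approximate route returned for this $B$ still meets the area quota (it sees at least as much as the optimal $B$-length route $\gamma_B$, which in turn sees at least $|V(\gamma)|\ge A$). Once the two-sided bounds and this budget-coverage claim are in place, both approximation ratios are immediate, and the running times follow from the per-iteration costs ($O(n^3)$ for touring-polygons, respectively $O(n^5/\varepsilon^6)$ for the budgeted-watchman subroutine) multiplied by the $O(\tfrac{\log n}{\varepsilon})$ search iterations.
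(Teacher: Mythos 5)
Your proposal is correct and follows essentially the same route as the paper: localize the optimum in a geodesic disk found by a $(1+\varepsilon)$-multiplicative search, compute an exact (touring-polygons) or approximate (budgeted-watchman with grid-guessed budget $B$) single route $\gamma$ inside the disk, split it into $k$ equal arcs with round trips to $s$, and combine the bound $\frac{|\gamma|}{k}\le\max_i|\gamma_i|$ (via the relative convex hull of $\bigcup_i\gamma_i$ staying inside the relatively convex disk) with the return-trip cost $2r$ (resp.\ $2(1+\varepsilon)r$). Your write-up is in fact slightly more explicit than the paper's on the relative-convexity justification and the budget-guessing verification; the minor difference in the final constant ($\varepsilon'=4\varepsilon+2\varepsilon^2$ versus the paper's bookkeeping) is immaterial since $\frac{1}{\varepsilon}=\Theta\bigl(\frac{1}{\varepsilon'}\bigr)$.
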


\section{Conclusion and Future Work}
The \textsc{Watchman Route} problem is a classic problem in computational geometry, with many practical implications. Even though the single watchman version is well studied, the multiple watchmen version is much less so. We give tight approximability results when the number of watchmen is fixed and a common depot is specified, strengthening our theoretical understanding of the problem. The main source of intractability and approximation hardness of the problem seems to be from when we have multiple depots (or none at all), in which case the notion of essential cuts is not well-defined, and we do not have a good characterization of when a set of routes sees everything.

Even in the anchored setting, many questions remain open. Can we get a PTAS when $k$, the number of watchmen, is part of the input? Is there a fixed-parameter tractable algorithm? What about in a polygon with holes? We suspect new ideas and algorithmic techniques are needed to answer these questions.

\bibliographystyle{plainurl}
\bibliography{refs.bib} 

\end{document}